\documentclass[11pt,letterpaper]{article}
\usepackage{fullpage} 
\usepackage[parfill]{parskip}

\usepackage{amssymb}

\usepackage{natbib}

\usepackage{nicefrac}       

\usepackage{pgfplots}
\usepackage{tikz}
\usepackage{amsfonts,amssymb}
\usepackage[caption=false]{subfig}
\usetikzlibrary{topaths,calc}

\usepackage{amsmath,amsthm}

\usepackage{xfrac}

\usepackage{comment}
\usepackage{bbm}
\usepackage{hhline}
\usepackage{mathtools}
\usepackage{ifthen}

\usepackage{graphicx,color}
\usepackage{fancybox}

\usepackage{array,float}
\usepackage{url}
\usepackage{mathrsfs}

\definecolor{cornellred}{rgb}{0.7, 0.11, 0.11}
\definecolor{maroon}{rgb}{0.52, 0, 0}
\definecolor{dgreen}{rgb}{0.0, 0.5, 0.0}
\definecolor{ballblue}{rgb}{0.13, 0.67, 0.8}
\definecolor{royalblue(web)}{rgb}{0.25, 0.41, 0.88}
\definecolor{bleudefrance}{rgb}{0.19, 0.55, 0.91}
\definecolor{royalazure}{rgb}{0.0, 0.22, 0.66}

\usepackage{enumitem}
\usepackage{multirow}
\usepackage{changepage}

\usepackage{accents}
\usepackage{todonotes}

\usetikzlibrary{arrows, decorations.markings}

\definecolor{bostonuniversityred}{rgb}{0.8, 0.0, 0.0}
\usepackage{hyperref}
\hypersetup{
	colorlinks = true,
	linkcolor=blue!70!black,
	citecolor=blue!70!black,
	linkbordercolor = {white}
}
\usepackage{cleveref}

\usepackage{thmtools}
\theoremstyle{plain}
\newtheorem{theorem}{Theorem}[section]
\newtheorem{lemma}[theorem]{Lemma}
\newtheorem{claim}[theorem]{Claim}

\newtheorem{proposition}[theorem]{Proposition}
\newtheorem{corollary}[theorem]{Corollary} 

\usepackage{thm-restate}

\theoremstyle{plain}
\newtheorem{definition}{Definition}[section] 
\newtheorem{example}[definition]{Example}
\newtheorem{remark}[definition]{Remark}

\theoremstyle{plain}

\newcommand{\xhdr}[1]{\vspace{3pt}\noindent{\bf {#1.}}}

\newcommand{\omt}[1]{}

\newcommand{\squishlist}{
   \begin{list}{{{\small{$\bullet$}}}}
    { \setlength{\itemsep}{3pt}      \setlength{\parsep}{1pt}
      \setlength{\topsep}{1pt}       \setlength{\partopsep}{0pt}
     \setlength{\leftmargin}{1em} \setlength{\labelwidth}{1em}
      \setlength{\labelsep}{0.5em} } }
\newcommand{\squishend}{  \end{list}  }

\newcommand{\squishlistTmp}{
   \begin{list}{{{\small{$\bullet$}}}}
    { \setlength{\itemsep}{2pt}      \setlength{\parsep}{0.5pt}
      \setlength{\topsep}{0.5pt}       \setlength{\partopsep}{0pt}
     \setlength{\leftmargin}{0.5em} \setlength{\labelwidth}{0.5em}
      \setlength{\labelsep}{0.5em} } }
\newcommand{\squishendTmp}{  \end{list}  }

\newcounter{relctr} 
\everydisplay\expandafter{\the\everydisplay\setcounter{relctr}{0}} 

\AtBeginDocument{} 

\newcommand{\argmax}{\mathop{\mathrm{arg\,max}}}






\newcommand{\blackwellOrder}{\preceq}

\newcommand{\supp}{\cc{supp}}

\newcommand{\messagespace}{\Sigma}

\newcommand{\messageprob}{\phi}

\newcommand{\signal}{\sigma}

\newcommand{\boxNum}{N}
\newcommand{\prior}{H}
\newcommand{\priorPDF}{h}
\newcommand{\priorMean}{\lambda}

\newcommand{\DP}{\cc{DP}}
\newcommand{\OPT}{\cc{OPT}}

\newcommand{\infoStrategy}{strategy}
\newcommand{\infoStrategies}{strategies}

\newcommand{\searcherU}{u}

%
%
\newcommand{\condition}{\mid}

\newcommand{\prob}[2][]{\text{Pr}\ifthenelse{\not\equal{}{#1}}{_{#1}}{}\!\left[{\def\givenn{\middle|}#2}\right]}
\newcommand{\expect}[2][]{\mathbb{E}\ifthenelse{\not\equal{}{#1}}{_{#1}}{}\!\left[{\def\givenn{\middle|}#2}\right]}
\newcommand{\indicator}[2][]{\mathbf{1}\ifthenelse{\not\equal{}{#1}}{_{#1}}{}\!\left\{{\def\givenn{\middle|}#2}\right\}}

\newcommand{\cc}[1]{\ensuremath{\mathsf{#1}}} 


\usepackage{mdwlist}
\makeatletter
\newenvironment{myproof}[1][\proofname]{\par
  \pushQED{\qed}%
  \normalfont \topsep2pt \relax
  \trivlist
  \item[\hskip\labelsep
        \itshape
    #1\@addpunct{.}]\ignorespaces
}{%
  \popQED\endtrivlist\@endpefalse
}
\makeatother

\begin{document}

\title{Intrinsic Robustness of Prophet Inequality to \\
Strategic Reward Signaling}
\author{Wei Tang\thanks{Chinese University of Hong Kong. Email: {\tt weitang@cuhk.edu.hk}.}
\and Haifeng Xu\thanks{University of Chicago. Email: {\tt haifengxu@uchicago.edu}.}
\and Ruimin Zhang\thanks{University of Chicago. Email: {\tt ruimin@uchicago.edu}.}
\and Derek Zhu\thanks{University of Chicago. Email: {\tt dzhu8@uchicago.edu}.}}

\maketitle

\begin{abstract}

Prophet inequality concerns a basic optimal stopping problem and states  that simple threshold stopping policies --- i.e., accepting the first reward larger than a certain threshold --- can achieve tight  $\sfrac{1}{2}$-approximation to the optimal prophet value.  Motivated by its economic applications, this paper studies the robustness of this approximation to natural \emph{strategic manipulations} in which each random reward is associated with a self-interested player who may selectively reveal his realized reward to the searcher in order to  maximize his probability of being selected. 

We say a threshold policy is $\alpha$(-strategically)-robust if it  (a)  achieves the  $\alpha$-approximation 
to the prophet value for strategic players; and (b) meanwhile remains a  $\sfrac{1}{2}$-approximation in the standard non-strategic setting.
Starting with a characterization of each player's optimal information revealing strategy, we demonstrate the intrinsic robustness of prophet inequalities to strategic reward signaling through the following results:
(1) for arbitrary reward distributions, there is a threshold policy that is $\frac{1-\sfrac{1}{e}}{2}$-robust, and this ratio is tight;
(2) for i.i.d.\ reward distributions,
there is a threshold policy that
is $\sfrac{1}{2}$-robust, which is  tight for the setting; and (3) for log-concave (but non-identical) reward distributions, the
 $\sfrac{1}{2}$-robustness can also be achieved under certain regularity assumptions.

\end{abstract}

\section{Introduction}

\newcommand{\prophet}{\cc{OPT}}

The  prophet inequality   of \citet{KUS-77} is a foundational framework for the theory of optimal stopping problems and sequential decision-making.
In the classic prophet inequality, a searcher faces a finite sequence of non-negative, real-valued and independent random variables $X_1, \ldots, X_\boxNum$ 
with known distributions $\prior_i$ from which a reward of value $X_i$ is drawn sequentially for $i=1, \ldots, \boxNum$.
Once a random reward is realized, the searcher decides whether to accept the realized reward and stop searching, or reject the reward and proceed to the next reward.
The searcher's objective is to maximize the value of the accepted reward.  The performance of the searcher's stopping policy is evaluated against a \emph{prophet value}
which equals to the ex-post maximum realized reward.
The classic and elegant result of \citet{SC-84} showed that a simple static threshold policy achieves at least half of the prophet value, and surprisingly, this bound is the best possible even among dynamic policies. Samuel-Cahn's policy uses the threshold that is the median of the distribution of the highest prizes, and then accepts the first realized reward that exceeds this threshold.
The existence of this $\sfrac{1}{2}$-approximation is now known as the {\em prophet inequality}. 

Recently, there is a regained interest of the prophet inequality due to its beautiful 
connection to online mechanism design (see, e.g., \cite{HKS-07,CHMS-10}), 
and many different settings of the prophet inequality have been studied (see the survey by \cite{A-17,C-19}). For example, \citet{KW-12} extend the prophet inequality to all matroid constraints and similar to \cite{SC-84}, they also show that a threshold stopping policy with the threshold equal to half of the expected maximum reward can also lead to the optimal $\sfrac{1}{2}$-approximation. Indeed, it is now well-known that there 
exists a range of thresholds for the classic prophet inequality that can achieve the optimal $\sfrac{1}{2}$-approximation (see \Cref{defn:1/2 threshold}).

An important assumption in the classic prophet inequality is that the distribution of each random variable is an inanimate object and, once the searcher reaches it, it will fully disclose its realized reward to the searcher. 
Yet this may not be the case in many real-world applications where each distribution 
may often be associated with a {\em strategic player}\footnote{We refer to the one that controls the reward information as the ``player'', and refer to the one that decides when to stop as the ``searcher''.}  
who may have incentives to selectively disclose information to maximize his own probability of being chosen by the searcher. 
This is usually the case when information is not controlled by nature but by humans or algorithms. Such examples are ample in economic activities. For instance,  when a recruiter searches for the best candidate for a position by sequentially interviewing a set of job applicants, each job applicant is naturally a strategic player and would want to control how much information they disclose about their characteristics (including strengths, weaknesses, personality, and experience) to the recruiter (the searcher) so as to maximize the probability of being hired. Similarly, when a venture capitalist searches for the most promising startup to invest by sequentially visiting each startup, the startups are strategic players. They 
have more accurate information about their own potential and can control how much of this information they disclose to attract the investment from the venture capitalist  (the searcher).
Finally, in the real estate market, when a buyer searches for the most attractive property to purchase,   the property sellers are strategic players who can control the amount of information they disclose about the property's condition and potential returns to the buyer  (the searcher).

Motivated by real-world applications like above, we introduce and study 
a natural variant of the prophet inequality   where each reward distribution is associated with a strategic player who can decide what information about the realized reward will be disclosed to the searcher. 
To capture such partial information revelation, we adopt the standard framework of information design  (also known as Bayesian persuasion \citep{KG-11,BM-16}) and assume that each player can selectively disclose reward information by implementing an information  strategy -- 
often referred to   as an \emph{experiment} or \emph{signaling scheme} \citep{KG-11,BM-16} --  which stochastically maps 
the realized reward (unobservable to the searcher) to a random signal (observable to the searcher). For the motivating applications above, such revelation strategies  encode  answers to standard interview questions (e.g., experiences, behavioral questions) prepared in advance by job candidates, the presentations prepared by startups for the VC, or the property's brochures prepared by property sellers. 
Each player aims to maximize his probability of being chosen by the searcher,
leading to a (constant-sum) competition among them. 
\xhdr{Our Results} In this work, we stand in the searcher's shoes and look to  understand how robust the classic prophet inequalities are to the above strategic player behaviors. Like most previous work in this space, we restrict our attention to threshold stopping policies.
 We start by characterizing players' optimal information revealing strategies. Given any threshold stopping policy with threshold $T$, the optimal information 
strategy of player $i$ with prior reward distribution $H_i$  has the following clean threshold structure (albeit using a threshold different from $T$): 
there exists a reward cutoff $t_i$ such that   player $i$ simply reveals whether $X_i \geq t_i$ or $X_i < t_i$. Moreover, $t_i$ satisfies $\expect{   X_i| X_i \geq t_i}  = T$. In words, player $i$ simply pools all the ``good rewards'' together to make their expectation barely pass the threshold $T$.\footnote{More formally, these response strategies form a subgame perfect Nash equilibrium among the sequential-move game of players (which turns out to be unique as far as players' utilities are concerned) (see \Cref{rmk:SPNE}).} This characterization allows us to reduce players' strategic behaviors to a related prophet inequality problem with binary reward supports. Our later analysis hence only needs to further investigate how well the thresholds for classic prophet inequalities perform in this (related) additional problem. 

Armed with the above characterization, next we turn to analyze the intrinsic robustness of the classic prophet inequality.  We say a   threshold stopping policy with threshold $T$ is $\alpha$(-strategically)-robust if it retains the $\sfrac{1}{2}$-approximation in the classic setting and meanwhile also can achieve
$\alpha$-approximation in the strategic scenario when all the players optimally reveal information. Our first main result is that, for arbitrary reward distributions, the threshold   policy with $T$ equaling the \emph{half expected max} threshold of \citet{KW-12} is 
$\frac{1-\sfrac{1}{e}}{2}$-robust  (\Cref{thm:general prior}). Moreover,  this competitive ratio of $\frac{1-\sfrac{1}{e}}{2} \approx  0.316 $  is the best among all known threshold policies that can secure the $\sfrac{1}{2}$-approximation in the classic non-strategic setting (\Cref{lem:tightness general priors}). 
This suggests that this well-studied threshold  policy
can also perform robustly well even when players are all strategic, illustrating the intrinsic robustness of the classic prophet inequality.

When the reward distributions are identical 
--  referred to as  \emph{IID distributions} which have been extensively studied in literature
\citep{PST-22,CFHOV-17,AKW-14,HHS-20,AEEHKL-17,HK-82} --  
we show that there exists a threshold in the range of known thresholds (\Cref{defn:1/2 threshold}) that is  $\sfrac{1}{2}$-robust  (\Cref{thm:IID prior}). Moreover, this $\sfrac{1}{2}$ approximation ratio is optimal among all possible threshold stopping policies for any IID distributions (\Cref{prop:IID tightness}).
Finally, when the reward distributions are not identical but log-concave, under certain regularity conditions  we show that any threshold between the  expected max and the median of the highest reward  \citep{SC-84} is $\sfrac{1}{2}$-robust (\Cref{thm:mainOblivious}). 
Note that log-concave reward distributions have been considered
in previous works on prophet inequality (see, e.g., \cite{AKW-14,CFHOV-17}); moreover,  it is satisfied by a wide range of distributions (e.g., normal, uniform, Gamma, Beta, Laplace, etc., \cite{BB-06}) and is 
a widely adopted assumption  in algorithmic game theory (see, e.g., \cite{CHK-07}).

\xhdr{Additional Related Work}
Prophet inequality is
a fundamental problem in optimal stopping theory which
was introduced in the 70s \cite{KS-78,KUS-77}.
Recently there has been a growing interest in prophet inequalities, generalizing the problem to different settings \citep{A-14,BMNK-07,CFHOV-17,DJSW-11,DSA-12,KW-12,CDL-23,AK-22,AKW-14,DMZ-22}. Our discussion here cannot do justice to its rich literature; hence, we refer interested readers 
to the recent survey by \citet{A-17,C-19} for   comprehensive overviews of 
recent developments and its connections to economic problems. 
This work  takes an informational perspective and associate 
each distribution in the prophet inequality with a strategic player 
that strategically signals reward information to the searcher. 
We follow the information design literature \citep{KG-11,BM-16}
and allow   players to  design  signaling schemes to influence 
the searcher's beliefs about the realized rewards. 
In our considered game, players are competing with each other 
for the selection of the searcher. The players' game thus 
also shares similarity with 
competitive information design \citep{GK-16,GK-17,AK-20,DHFTX-23,hossain2024multi}.

Conceptually, our work also relates to the rapidly growing literature on strategic machine learning (see, e.g., \cite{ZC-21,SMPW-16,GNE-21,DRS-18,SVXY-23,ZMS-21,BLW-21}), which studies learning  from strategic data providers. We also study similar strategic reward providers, albeit in a different algorithmic problem, i.e., optimal stopping.
More generally, our work  subscribes to the literature on   information design in sequential decision-making.
In particular, our paper relates to the recently increased interests 
on using online learning approaches to study the regret minimization 
when an information-advantaged player repeatedly interacts with an 
information-disadvantaged player \citep{BCC-23,CCMG-20,FTX-22,WZFWYJX-22,ZIX-21} without knowing their preferences. Instead of focusing on regret minimization, we use the competitive ratio (as conventionally used in prophet inequalities) to measure the searcher's policy.
It is worth mentioning that \cite{HHS-20} also studies strategic information revealing in prophet inequality problems. In their setting, a centralized player strategically discloses reward information to the searcher, while players in our setting form a decentralized game and each player acts on his own to maximize his payoff.

\section{Preliminary}

\label{sec:setting}

\newcommand{\equilStrat}{G^*}
\newcommand{\bestResponseG}{G^{\cc{BR}}}
\newcommand{\boxU}{u^B}
\newcommand{\HEMthreshold}{\theta^{\cc{HEM}}}

\newcommand{\policy}{q}
\newcommand{\stopIdx}{\mathbb{A}}

\newcommand{\medianThresholdUB}{T_{\cc{SC}}}
\newcommand{\medianThresholdLB}{\underline{T}_{\cc{SC}}}
\newcommand{\halfMax}{T_{\cc{KW}}}
\newcommand{\TEqual}{T^*}

\newcommand{\searcherUS}{\searcherU^{\cc{s}}}
\newcommand{\searcherUNS}{\searcherU^{\cc{ns}}}

In this section, we first revisit the formulation of the classic prophet inequality problem, and then
formally introduce our setting as its natural variant where each distribution is associated with a strategic player who will be strategically signaling their reward information.

\xhdr{Classic prophet inequality} In standard settings, a 
searcher faces a finite sequence of known
distributions $\prior_{1:\boxNum} \triangleq  
(\prior_i)_{i\in[\boxNum]} $
of $\boxNum$ non-negative independent random variables.
The outcomes (i.e., the rewards) $X_i\sim\prior_i$  \footnote{Here $\prior_i$ denotes the cumulative distribution function (CDF) of the reward $X_i$} for $i\in[\boxNum]$
are revealed sequentially to the searcher for $i=1,\ldots, \boxNum$.
Let $\lambda_i\triangleq \expect[\prior_i]{X_i}$  denote the mean reward for distribution $\prior_i$.
Upon seeing a reward, the searcher 
 decides whether to accept the observed reward and stop the process, or irrevocably move on to the next reward.
The searcher's goal is to maximize the expected accepted reward.
The searcher's expected payoff cannot be more than that of a prophet who knows in advance the realizations of the rewards, namely, $X_1, \ldots, X_\boxNum$.
We denote by 
$\prophet \triangleq 
\expect[\prior_1, \ldots, \prior_\boxNum]{\max_{i\in[\boxNum]}X_i}$ the prophet value. 
Given a stopping policy $\policy$, let $X^{(q)}$
be the accepted reward.
The stopping policy $\policy$ is said to be $\alpha$-approximate (of the prophet value) if 
the following holds for the searcher's expected payoff for any input distributions $\prior_{1:\boxNum}$:
\begin{align*}
    \expect{X^{(\policy)}} \ge \alpha \cdot \prophet~.
\end{align*}
The above statement is often referred to as the prophet inequality. For any stopping policy $\policy$, the largest constant $\alpha \in (0,1]$  that satisfies the above inequality is named the competitive ratio.
Classic results of \cite{KS-78,KUS-77} elegantly show that 
there   exist simple threshold stopping policies
that achieve the competitive ratio $\sfrac{1}{2}$, and moreover the 
ratio of $\sfrac{1}{2}$ is tight.\footnote{See \cite{A-17,C-19} for the hardness example for the $\sfrac{1}{2}$-approximation.}
\begin{definition}[Threshold Stopping Policy]
\label{defn:threshold}
A threshold stopping policy is an online algorithm which
pre-computes a threshold value $T$ as a function of the distributions $\prior_{1:\boxNum}$, and then accepts the first reward $X_i$
whose value is no smaller than the threshold, i.e., $X_i \ge T$.
\end{definition}
It is well-known that multiple thresholds can achieve the optimal $\sfrac{1}{2}$-approximation ratio.
\begin{definition}[$\sfrac{1}{2}$-approximation Threshold Spectrum]
\label{defn:1/2 threshold}
Let $\halfMax \triangleq \sfrac{\expect{\max_i X_i}}{2}$ (\citet{KW-12}), 
let $\TEqual$ satisfy 
$\TEqual = \sum_{i\in[\boxNum]}\expect{(X_i - \TEqual)^+}$, and
let 
$\medianThresholdUB \triangleq 
\sup\{t: \prob{\max_i X_i \ge t} \ge \sfrac{1}{2}\}$ (\citet{SC-84}).
Then any threshold stopping policy with threshold
$T\in [\halfMax,\max\{\medianThresholdUB, \TEqual\}]$
guarantees $\sfrac{1}{2}$-approximation to the prophet value.\footnote{
There also exists other thresholds that could lead to $\sfrac{1}{2}$-approximation, e.g., any $T\in[\min\{\medianThresholdLB,\halfMax\}, \halfMax)$ where $\medianThresholdLB \triangleq \inf\{t: \prob{\max_i X_i \ge t} \ge \sfrac{1}{2}\}$. However, using these thresholds requires to modify the policy defined in \Cref{defn:threshold} to be a {\em strict} stopping policy to obtain $\sfrac{1}{2}$-approximation (i.e., searcher accepts a first reward that is strictly larger than threshold).
Moreover, under strict stopping policy, there may exist no Nash equilibrium among players' game that we formulate shortly.
}
\end{definition}

    

\xhdr{Prophet inequality with strategic reward signaling}
In the strategic setting, each distribution $\prior_i$ may be associated with a strategic 
player that governs how much information about the realized reward he would 
like to reveal to the searcher once the searcher reaches his distribution.
Formally, 
upon arriving at the reward distribution $\prior_i$, 
the searcher does not directly observe the realized reward $X_i \sim \prior_i$. 
Instead, she observes an information signal, designed by the player $i$ with distribution $\prior_i$, that is correlated with the reward $X_i$.
We follow the literature in information design \citep{KG-11}
to model this strategic reward signaling:
Each player $i$  chooses a signaling scheme 
$\messageprob_i(\cdot \condition x)\in\Delta(\messagespace_i)$,
where $\messagespace_i$ is a  measurable signal space and 
$\messageprob_i(\signal \condition x) \in [0, 1]$
specifies the conditional probability of a signal $\signal \in \messagespace_i$ that will be sent to the searcher
when the reward $X_i = x \sim \prior_i$ is realized.
Notice that, upon seeing a signal $\signal \sim \messageprob_i(\cdot \condition x)$,
together with the prior information $\prior_i$ from which the reward is realized,
the searcher can update her Bayesian belief about the underlying realized value $X_i$,
and then decides whether to stop and choose   player $i$, or reject $i$ to continue  her search.

Each player $i$ is competing with each other for the final 
selection from the searcher. 
Namely, a player obtains payoff $1$ if his reward is accepted and payoff $0$ if his reward is not accepted.\footnote{All of our results hold if each player $i$ has payoff $v_i$ if his reward is accepted and payoff $u_i$ if his reward is not accepted as long as $v_i > u_i$.}
Each player's goal is to design an information revealing strategy that maximizes his probability of being chosen by the searcher. We below give two simple examples of information revealing strategies.
\begin{example}[Examples of Information Strategies]
(1) No information  strategy: 
the signal is completely uninformative 
(e.g., the distribution 
s$\messageprob_i(\cdot \condition x)$ is a Dirac delta function on a single signal, i.e., $|\messagespace_i| = 1$), hence the searcher infers an expected reward of $\lambda_i = \expect{X_i}$ as her perceived reward from player $i$;  
(2) Full information revealing strategy: 
the signal perfectly reveals player $i$'s value to
the searcher
(i.e., $\messageprob_i(\signal \equiv x\condition x) = 1$ 
for every realized $X_i = x$, and $\messagespace_i = \supp(\prior_i)$) 
\end{example}



In this work, standing in the searcher's shoes, we are interested in how threshold stopping policies perform under players' optimal strategic reward signaling. 

\xhdr{Game Timeline} 
 The timeline of our prophet inequality with strategic players problem, where the searcher employs a threshold stopping policy,
 can be detailed as follows: 
(1) Knowing $\prior_{1:N}$, the searcher first announces a threshold stopping policy with threshold $T$ that is a function of $\prior_{1:N}$;
(2) Knowing threshold $T$, each player then picks a signaling scheme (also known as an \emph{experiment} in economics literature \cite{KG-11,BM-16}) to reveal partial information about the underlying reward; and
(3) The searcher learns all players' information strategies, and
then conducts a search based on her threshold stopping policy with threshold $T$ (i.e., accepting the first player whose posterior mean of his reward distribution, given his revealed information signal, exceeds the threshold $T$). The assumption that the searcher knows information strategies as well as realized signals is commonly adopted in the information design literature \cite{KG-11,BM-16}, and is also well motivated for the domains of  our interest. For instance, when startups persuade VCs or property sellers persuade buyers, the signaling scheme could correspond to startups’ product exhibitions or sellers’ promotion brochures  which   determine what information the searcher could see. Misreporting realized signals corresponds to revealing untrue information, which not only violates regulation policies and but also causes the players to lose credibility in the long term.

Slightly abusing the notation, we also use $X^{(q)}$ to denote the accepted reward given the searcher's stopping policy $\policy$ under the strategic reward signaling, and let $\searcherUS(q)\triangleq \expect{X^{(q)}}$
be the searcher's expected payoff. Notice that here the expectation is 
not only over the randomness of the distributions $\prior_{1:\boxNum}$, 
but also the information strategies $\{\phi_i(\cdot\mid x)\}$.
Anticipating the players' strategic behavior, 
 the searcher wants a stopping policy that can still guarantee a good performance 
competing against the prophet value $\prophet$. 

 As mentioned earlier, we are particularly interested in 
how previously studied threshold stopping policies described  
in \Cref{defn:1/2 threshold} perform under the players' strategic reward signaling. To formalize our goal, we introduce the following notion of strategic robustness.  


\begin{definition}[$\alpha$(-strategically)-robust Stopping Policies]
    A stopping policy $\policy$ is $\alpha$-robust if it achieves $\alpha$-approximation to the $\prophet$ when players are strategically signaling their rewards, and it remains a $\sfrac{1}{2}$-approximation in the standard non-strategic setting. 
\end{definition}

\section{Warm-up: Characterizing Optimal Information Revealing Strategy}
\label{sec:opt infor}

We start our analysis by showing that  when the searcher adopts a threshold stopping policy (\Cref{defn:threshold}), 
each player's optimal information revealing strategy admits clean characterizations. 
\begin{proposition}[Optimal Information Revealing Strategy] 
\label{prop:optimal signaling scheme}
Given a threshold stopping policy as in \Cref{defn:threshold} with threshold $T$, 
for each player $i$: 
\squishlist
    \item if $T \le \lambda_i$, then player $i$'s optimal information revealing strategy is the no information strategy;
    \item 
    if $T > \lambda_i$, then player $i$'s optimal information revealing strategy is \emph{threshold signaling} and determined by a cutoff  $t_i$ that satisfies  
    $T = \expect{X_i|X_i \geq t_i} =  {\int_{t_i}^{\infty} x ~\mathrm{d} \prior_i(x)}/(1 - \prior_i(t_i))$. That is, player $i$'s optimal signaling scheme sends one of  two signals: $X_i \geq t_i$ or $X_i < t_i$.\footnote{In the corner case when $T$ is larger than the upper bound of player $i$'s distribution, player $i$ will never get chosen and thus they have no strategy}
\squishend
\end{proposition}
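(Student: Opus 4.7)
The plan is to treat each player's decision as a single-agent Bayesian persuasion problem and solve it by combining the \MPS{} characterization of feasible signaling schemes with a pooling argument and an ironing step. I would first observe that under any threshold policy with threshold $T$, the event that player $i$ is reached depends only on the schemes and realizations of players $1,\dots,i-1$, so it is independent of player $i$'s own $\messageprob_i$. Hence player $i$'s best response simply maximizes his probability of acceptance \emph{conditional on being reached}, which equals $\prob[\signal\sim\messageprob_i]{\expect{X_i\mid\signal}\geq T}$. Letting $G$ denote the induced distribution of the posterior mean $\mu=\expect{X_i\mid\signal}$, Blackwell's theorem says the feasible set is exactly $\{G:\, G\blackwellOrder \prior_i\}$, and player $i$'s problem becomes $\max_{G\blackwellOrder \prior_i}\ \prob[\mu\sim G]{\mu\geq T}$. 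When $T\leq\lambda_i$, the no-information scheme induces the point mass at $\lambda_i\geq T$, which is feasible and achieves acceptance probability $1$, so it is trivially optimal.

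For the case $T>\lambda_i$, I proceed in two reduction steps. A \emph{pooling argument} first restricts attention to two-atom distributions: given any feasible $G$, the distribution $G'$ placing mass $p=\prob[\mu\sim G]{\mu\geq T}$ at $\expect[G]{\mu\mid\mu\geq T}\geq T$ and mass $1-p$ at $\expect[G]{\mu\mid\mu<T}$ is a mean-preserving contraction of $G$, so $G'\blackwellOrder G\blackwellOrder \prior_i$ by transitivity of the \MPS{} order, and $G'$ achieves the same acceptance probability $p$. On two-atom feasible distributions with atoms $\mu_L<T\leq\mu_H$ and mean $\lambda_i$, one has $p=(\lambda_i-\mu_L)/(\mu_H-\mu_L)$; since $\lambda_i<T$, this is strictly decreasing in both $\mu_H$ and $\mu_L$, so the optimum sets $\mu_H=T$ and pushes $\mu_L$ as low as feasibility allows. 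A feasible binary $G$ with atoms $(\mu_L,T)$ corresponds to a measurable partition $(S,S^c)$ of $\supp(\prior_i)$ with $\expect{X_i\mid X_i\in S}=T$, and a rearrangement argument shows that the largest $p=\prob{X_i\in S}$ compatible with $\expect{X_i\mid X_i\in S}\geq T$ is obtained by the upper-tail set $S=\{X_i\geq t_i\}$, where $t_i$ is the unique solution of $\expect{X_i\mid X_i\geq t_i}=T$. This is exactly the two-signal threshold scheme in the proposition.

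Existence and uniqueness of $t_i$ follow because $t\mapsto \expect{X_i\mid X_i\geq t}$ is continuous and non-decreasing on $\supp(\prior_i)$, equals $\lambda_i<T$ at $t=0$, and tends to $\sup\supp(\prior_i)$ as $t$ grows; the intermediate value theorem then delivers $t_i$ whenever $T<\sup\supp(\prior_i)$, covering every non-trivial case (the corner case $T\geq\sup\supp(\prior_i)$ noted in the footnote is vacuous since the player is never accepted). The main obstacle I anticipate is formalizing the ironing step for continuous priors, where literal pointwise swaps between $S$ and $S^c$ are not available: the cleanest route recasts the optimization as maximizing $\expect{X_i\,\mathbf{1}[X_i\in S]}$ over indicators $\mathbf{1}_S$ of prescribed $L^1(\prior_i)$-norm and invokes a Hardy--Littlewood-type rearrangement inequality, which pins down the maximizer as the indicator of an upper-tail set.
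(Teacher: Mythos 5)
Your proposal is correct and follows the same basic route as the paper's proof: both reduce the problem to maximizing $\Pr_{\mu\sim G}[\mu\ge T]$ over distributions $G$ that are mean-preserving contractions of $H_i$ (via Blackwell), and both note that the reach event is out of player $i$'s control so a best response optimizes the conditional acceptance probability. Where you differ is in rigor on the pivotal step: the paper's Appendix A proof states that ``together with the above MPC definition and the construction of the distribution $G_i$, it is easy to see that we always have $p_i\ge\Pr_{x\sim G}[x\ge T]$'' and leaves it there; you fill that gap explicitly with a two-stage reduction --- first pooling an arbitrary feasible $G$ to a two-atom $G'$ by transitivity of the MPS order, then showing the optimal two-atom scheme is the upper-tail threshold scheme via a rearrangement argument. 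That is a genuine improvement in self-containment. One small point worth flagging: when you recast the inner optimization as choosing a measurable set $S$ with $\mathbb{E}[X_i\mid X_i\in S]\ge T$, you are implicitly restricting to deterministic signaling; in general a scheme is a function $\phi:\supp(H_i)\to[0,1]$, and the Hardy--Littlewood bound must be applied to this relaxed class before concluding the optimizer is a genuine indicator of an upper-tail set. For continuous priors this is harmless (the optimum is attained by an indicator), but it is exactly the point that breaks when $H_i$ has atoms, which is why the paper handles the point-mass case separately in Proposition A.1 with fractional pooling $q_i$; you should make the same caveat rather than asserting uniqueness of $t_i$ unconditionally.
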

We highlight the intuition behind  \Cref{prop:optimal signaling scheme} below and defer its formal proof  to Appendix. 
Under a threshold stopping policy, every player maximizes his utility by maximizing the probability  that the signal's posterior expected reward is at least $T$ (hence is selected). When the stopping threshold $T \le \lambda_i$, this probability is $1$ and hence maximized when player $i$ simply reveals no information. When $T > \lambda_i$, this probability is maximized when player $i$ blends the highest rewards together to form a posterior mean just equal to $T$, which is exactly the scheme described in \Cref{prop:optimal signaling scheme}. 

\begin{remark}[Addressing Point Masses]
\label{rmk:point mass opt infor}
For ease of presentation, \Cref{prop:optimal signaling scheme}  assumes the distribution $\prior_i$ is continuous. 
When $\prior_i$ has point masses, \Cref{prop:optimal signaling scheme} still holds, but with a more subtle description of the pooling cutoff $t_i$. 
We provide more detailed discussions in Appendix. 
\end{remark}

\begin{remark}
\label{rmk:SPNE}
In game-theoretic terminology,  \Cref{prop:optimal signaling scheme} characterizes the  subgame Perfect Nash equilibrium (SPNE) for the multi-player sequential game induced by any  threshold stopping policy that the searcher commits to. This SPNE happens to enjoy simple structures; indeed, each player's equilibrium strategy is only a function of the threshold $T$ and not on other players' strategies or their order. This clean characterization is a consequence of the simple structure of  (static) threshold policies. If the searcher's stopping policy is instead dynamic  (i.e., allowing the decision of player $i$ to depend on previously realized rewards), SPNE is also well-defined but will adopt significantly more complex structures. While analyzing the SPNE under these dynamic policies may also be interesting, it is beyond the scope of this work since our focus is to study the power of static threshold policies, which is also a central theme in the study of prophet inequalities. Finally, we note that the optimal strategies characterized in  \Cref{prop:optimal signaling scheme} are not unique but they all result in the same utility for every player. This is because player $i$ is actually indifferent on how to disclose 
the reward when $X_i \le t_i$, leading to many different but utility-equivalent information strategies. 
\end{remark}
\xhdr{Equivalent Representation of Optimal Information Revealing Strategy} 
When $T > \lambda_i$,
  player $i$'s optimal information revealing strategy described in \Cref{prop:optimal signaling scheme} pools all the rewards $X_i \sim H_i$ above $t_i$ together to forge a conditional mean value of $T$, and pools the remaining smaller rewards into another signal. We hence also refer to $t_i$ as the pooling cutoff. This threshold signaling scheme can be equivalently represented as 
a binary-support distribution $G_i$ supported on two realizations corresponding to the two signals respectively: 
\begin{align}\label{eq:binary-reduction}
    \prob[x\sim G_i]{x = T} = 1 - \prior_i(t_i), \quad 
    \prob[x\sim G_i]{x = a_i} = \prior_i(t_i) \text{ where } 
    a_i \triangleq \frac{\lambda_i - T (1 - \prior_i(t_i))}{\prior_i(t_i)}~.
\end{align}
In the literature of information design, this distribution $G_i$ is also known as a  mean-preserving contraction of prior reward distribution $\prior_i$ \cite{GK-16b,B-51,BG-79}.
 


Viewing the player $i$'s optimal information strategy as the distribution $G_i$, one can simplify the interaction between the searcher and player $i$ 
as the following when the searcher visits player $i$: 
a random reward $X_i\sim G_i$ is realized, and the searcher stops if and only if $X_i \ge T$.
With this observation, one can also reduce the original interaction to the following simplified protocol:
(1) the searcher first decides a stopping threshold $T$;
(2) each player $i$ chooses the distribution $G_i$  as in Equation \eqref{eq:binary-reduction} according to his prior 
$\prior_i$ and the threshold $T$;\footnote{When $T\le \lambda_i$, distribution $G_i$ is a point mass at mean value $\lambda_i$.} and
(3) the searcher visits each $G_i$ sequentially and stops when the first realized reward $X_i\ge T$ 
where $X_i\sim G_i$.


We emphasize that in the above reduction, given a threshold stopping policy with threshold $T$,
the searcher's expected payoff under the strategic reward signaling can be computed as $\searcherUS(T) = \expect{X^{(q)}}$ where the expectation is over distributions $G_{1:\boxNum}$
and $X^{(q)}$ is the first realized $X_i\sim G_i$ such that $X^{(q)} \ge T$.

\section{Achieving \texorpdfstring{$\frac{1-\sfrac{1}{e}}{2}$}{}-robustness for Arbitrary Distributions}
\label{sec:general prior}

In this section, we show that for any distributions $\prior_{1:\boxNum}$,
there exists a $\frac{1-\sfrac{1}{e}}{2}$(-strategically)-robust threshold stopping policy using a threshold within the spectrum in \Cref{defn:1/2 threshold}.
The main result in this section is stated below.
\begin{theorem}
\label{thm:general prior}
For any distributions $\prior_{1:\boxNum}$,
a threshold stopping policy with threshold $T = \halfMax$ is $\frac{1-\sfrac{1}{e}}{2}$-robust.t
\end{theorem}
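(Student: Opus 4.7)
My plan is to verify the two requirements of $\alpha$-robustness separately. Condition (b), that the $\halfMax=\tfrac{1}{2}\expect{\max_i X_i}$ threshold yields a $\sfrac{1}{2}$-approximation in the non-strategic setting, is exactly the classical Kleinberg--Weinberg result, so I will cite it and concentrate on the strategic guarantee.

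For the strategic side, I invoke \Cref{prop:optimal signaling scheme} and the binary reduction in \eqref{eq:binary-reduction}: each player effectively draws from a two-point distribution $G_i$ whose accepting signal has value at least $T$ (equal to $\lambda_i \ge T$ if $T \le \lambda_i$, and equal to $T$ otherwise). Let $p_i \triangleq \prob[x\sim G_i]{x \ge T}$. Since the searcher accepts the first player whose signal is accepting, and every accepted value is at least $T$,
\begin{align*}
    \searcherUS(\halfMax) \;\ge\; T\cdot\Bigl(1 - \prod_{i\in[\boxNum]}(1-p_i)\Bigr).
\end{align*}

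The crux is then to show $\sum_i p_i \ge 1$, after which the standard inequality $1-x\le e^{-x}$ gives $\prod_i(1-p_i)\le e^{-\sum p_i}\le e^{-1}$ and hence $\searcherUS(\halfMax) \ge T(1-\sfrac{1}{e}) = \tfrac{1-\sfrac{1}{e}}{2}\cdot\prophet$. To prove $\sum_i p_i\ge 1$ I split into cases. If some player has $\lambda_i \ge T$, then \Cref{prop:optimal signaling scheme} gives $p_i=1$ and we are done. Otherwise, $T>\lambda_i$ for all $i$, so each $G_i$ is genuinely binary with cutoff $t_i\le T$ satisfying $\expect{X_i\mid X_i\ge t_i}=T$; this yields
\begin{align*}
    T\cdot p_i \;=\; \int_{t_i}^{\infty} x\, \mathrm{d}\prior_i(x) \;\ge\; \int_T^{\infty} x\, \mathrm{d}\prior_i(x) \;\ge\; \expect{(X_i-T)^+}.
\end{align*}
Summing and combining with the Kleinberg--Weinberg ``half-expected-max'' inequality $\sum_i \expect{(X_i-T)^+} \ge \prophet - T = T$ (which is the key identity making $\halfMax$ a $\sfrac12$-approximation threshold) gives $T\sum_i p_i \ge T$, i.e., $\sum_i p_i\ge 1$, as needed.

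I expect the main subtlety to be bookkeeping rather than a conceptual obstacle: verifying that (i) the inequality $T\cdot p_i \ge \expect{(X_i-T)^+}$ is valid for the binary distribution $G_i$ even with point masses in $\prior_i$ (handled via the more careful description of the cutoff noted in \Cref{rmk:point mass opt infor}), and (ii) the two cases above partition cleanly so that in either case $\sum p_i\ge 1$ holds. Once those are in place, the $\frac{1-\sfrac{1}{e}}{2}$ bound follows immediately by the multiplicative-to-exponential estimate, completing both robustness requirements.
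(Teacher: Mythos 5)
Your proof is correct and takes a genuinely different route from the paper's. The paper introduces an auxiliary threshold $T^\dagger$ defined implicitly by $\prod_i \prior_i(t_i^\dagger) = \left(\frac{\boxNum-1}{\boxNum}\right)^{\boxNum}$, establishes a bound at $T^\dagger$ by upper-bounding $\OPT \le t_I\prior_I(t_I) + \sum_i T(1-\prior_i(t_i))$ (\Cref{lem:ub opt}) and then running an AM--GM argument on the ratio to extract the constant $\frac{1-(\frac{\boxNum-1}{\boxNum})^\boxNum}{2}$, and finally transfers the guarantee to $\halfMax$ via a monotonicity argument ($T^\dagger \ge \halfMax$ implies $t_i^\dagger \ge t_i^\ddag$ implies the acceptance-probability product at $\halfMax$ is at least as favorable). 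You instead work directly at $T = \halfMax$: you show $\sum_i p_i \ge 1$ by chaining $Tp_i \ge \expect{(X_i-T)^+}$ with the standard Kleinberg--Weinberg excess inequality $\sum_i\expect{(X_i-T)^+} \ge \prophet - T = T$, then finish with $\prod_i(1-p_i)\le e^{-\sum p_i}\le e^{-1}$. Your route is shorter, avoids the auxiliary threshold and the AM--GM/function-analysis step, and makes the dependence on the Kleinberg--Weinberg identity transparent; it also handles the boundary case $\max_i\lambda_i\ge T$ in-line (that case gives $p_i=1$ for some $i$) rather than via a footnote as the paper does. The trade-off is that the paper's argument yields the sharper finite-$\boxNum$ constant $\frac{1-(\frac{\boxNum-1}{\boxNum})^\boxNum}{2}$, which is what matches the tightness example in \Cref{lem:tightness general priors}; your $1-x\le e^{-x}$ step immediately degrades to the limiting $\frac{1-\sfrac{1}{e}}{2}$, though one could recover the finite-$\boxNum$ constant from your inequality $\sum p_i \ge 1$ by observing that $\prod(1-p_i)$ subject to that constraint is maximized at $p_i=1/\boxNum$. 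Both proofs need the same care around point masses (replacing $\prior_i(t_i)$ by the correctly defined rejection probability, as in \Cref{prop:optimal signaling scheme point mass}); you flagged this and the $q_i t_i \ge 0$ slack makes the inequality go through.
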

From \Cref{defn:1/2 threshold}, we know that using the threshold stopping policy with threshold $\halfMax$
can achieve the optimal $\sfrac{1}{2}$-approximation in the classic 
prophet inequality for any distributions $\prior_{1:\boxNum}$.
 \Cref{thm:general prior} above shows another desired property of 
the threshold $\halfMax$: it can achieve $\frac{1-\sfrac{1}{e}}{2}$-approximation even when distributions $\prior_{1:\boxNum}$ are strategically signaling their rewards, thus establishing 
its $\frac{1-\sfrac{1}{e}}{2}$-robustness.
Given the optimality of the threshold $\halfMax$ in the non-strategic setting,
it would be intriguing to ask whether this threshold $\halfMax$ can also 
achieve $\sfrac{1}{2}$-approximation under the strategic setting, or 
if there exists a threshold within the spectrum in \Cref{defn:1/2 threshold} that can achieve 
$\sfrac{1}{2}$-approximation under the strategic setting. 
Below we   show  that the answer is No. In fact,  any threshold stopping policy using a threshold from the spectrum in \Cref{defn:1/2 threshold} cannot achieve 
$(\frac{1-\sfrac{1}{e}}{2}+\varepsilon)$-approximation for any $\varepsilon >0$ under strategic reward signaling.
\begin{proposition}[Tightness of \Cref{thm:general prior}]
\label{lem:tightness general priors}
There exist distributions $\prior_{1:\boxNum}$ such that no threshold from the spectrum
in \Cref{defn:1/2 threshold} can achieve $\alpha$-robustness 
where $\alpha > \frac{1-(1 - 1/(\boxNum-1))^{\boxNum-1}}{2}$.
\end{proposition}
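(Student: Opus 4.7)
The plan is to exhibit an explicit family of distributions that saturates the bound in the limit. Fix any $V > 0$, let $\epsilon > 0$ be small and $A > 0$ be large, and set $X_1 = V - \epsilon$ deterministically, while for each $i \in \{2, \ldots, \boxNum\}$, let $X_i = A$ with probability $V/((\boxNum-1)A)$ and $X_i = 0$ otherwise. This gives $\lambda_i = V/(\boxNum-1)$ for $i \geq 2$, chosen precisely so that at threshold $T = V$ the strategic pooling probability $\lambda_i/T$ equals $1/(\boxNum-1)$---the quantity appearing in the bound.

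Letting $\gamma := (1 - V/((\boxNum-1)A))^{\boxNum-1}$, direct calculation yields $\prophet = A(1-\gamma) + (V-\epsilon)\gamma \to 2V - \epsilon$ as $A \to \infty$. Next I would locate the three threshold candidates of \Cref{defn:1/2 threshold}: $\halfMax = \prophet/2 \to V - \epsilon/2$; solving $\TEqual = (V/A)(A - \TEqual)$ gives $\TEqual = VA/(A+V) \to V$; and since $\Pr[\max_i X_i \geq t] = 1-\gamma \to 0 < \sfrac{1}{2}$ for $t \in (V-\epsilon, A]$, we have $\medianThresholdUB = V - \epsilon$. Consequently the spectrum $[\halfMax, \max\{\medianThresholdUB, \TEqual\}]$ collapses into $[V - \epsilon/2, \, V]$ as $A \to \infty$.

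I would then analyze each $T$ in this narrow spectrum. Since $T \geq \halfMax > V - \epsilon = X_1$, player 1 is always rejected. For $i \geq 2$ (assuming $\boxNum \geq 3$), $\lambda_i < T$, so \Cref{prop:optimal signaling scheme} says each such player independently pools to produce an above-threshold signal of posterior mean $T$ with probability $\lambda_i/T = V/((\boxNum-1)T)$. Therefore
\[
\searcherUS(T) \;=\; T \cdot \Bigl[1 - \bigl(1 - V/((\boxNum-1)T)\bigr)^{\boxNum-1}\Bigr].
\]
Substituting $u = V/((\boxNum-1)T)$ recasts this as $\searcherUS(T) = (V/(\boxNum-1)) \cdot g(u)$ with $g(u) = (1 - (1-u)^{\boxNum-1})/u$. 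An elementary estimate, following from $(1-u)^{\boxNum-2}(1 + (\boxNum-2)u) \leq 1$ on $[0,1]$, shows $g$ is decreasing in $u$, hence $\searcherUS(T)$ is increasing in $T$. The searcher's best choice in the spectrum is therefore $T = V$, at which the ratio is
\[
\frac{\searcherUS(V)}{\prophet} \;=\; \frac{V\,[1 - (1 - 1/(\boxNum-1))^{\boxNum-1}]}{2V - \epsilon},
\]
which converges to $(1 - (1-1/(\boxNum-1))^{\boxNum-1})/2$ as $\epsilon \to 0$. Taking $\epsilon$ sufficiently small and $A$ sufficiently large then rules out $\alpha$-robustness for any $\alpha$ strictly exceeding this value.

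The main technical obstacle is the monotonicity of $\searcherUS(T)$ on the spectrum, which pins the worst-case ratio at the upper endpoint $T = V$; without it, the searcher might be able to extract a strictly better ratio at some interior $T$. A minor additional case is $\boxNum = 2$: there $\lambda_2 = V \geq T$ so player 2 uses no information rather than pooling, but a direct computation gives payoff $V$ and ratio $V/(2V-\epsilon) \to \sfrac{1}{2}$, matching the stated bound $(1 - (1-1/1)^1)/2 = \sfrac{1}{2}$ at $\boxNum = 2$.
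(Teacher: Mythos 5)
Your proposal is correct and takes essentially the same approach as the paper's: your construction (one deterministic reward $V-\epsilon$, plus $\boxNum-1$ binary rewards on $\{0,A\}$ with mean $V/(\boxNum-1)$) is exactly a rescaled version of the paper's instance, which fixes $V=\boxNum-1$, and the computation of the spectrum endpoints, the reduction to the limit ratio at the upper endpoint, and the separate $\boxNum=2$ case all match. The one detail you supply that the paper only asserts is the monotonicity of $\searcherUS(T)$ over the spectrum; your elementary inequality $(1-u)^{\boxNum-2}\bigl(1+(\boxNum-2)u\bigr)\le 1$ gives a clean proof of this step.
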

Notice that $\lim_{\boxNum \rightarrow\infty}  \frac{1-(1 - 1/(\boxNum-1))^{\boxNum-1}}{2} = \frac{1-\sfrac{1}{e}}{2}$. 
Thus, the above \Cref{lem:tightness general priors} establishes the tightness 
of the results in \Cref{thm:general prior}.
\begin{remark}
\label{rmk:1/2 approx general}
We point out a subtlety in the above lower bound, which leads to an intriguing open problem. 
\Cref{lem:tightness general priors} shows that any threshold within the spectrum in \Cref{defn:1/2 threshold} cannot achieve $(\frac{1-\sfrac{1}{e}}{2}+\varepsilon)$-approximation under strategic reward signaling. However, 
  this does not rule out 
the possibility of having a threshold outside that spectrum that achieves $\sfrac{1}{2}$-robustness. This is an interesting open question to resolve, though necessarily challenging since it is even already quite non-trivial to prove $\sfrac{1}{2}$-approximation in the non-strategic setting for thresholds outside the spectrum of \Cref{defn:1/2 threshold}, let alone achieving $\sfrac{1}{2}$-approximation simultaneously in both worlds.     
\end{remark}
\Cref{thm:general prior}  holds for all distributions, regardless of being discrete or continuous. 
In the remainder of this subsection, we present the proof of \Cref{thm:general prior} only for the continuous distribution case. The proof of this case carries our core ideas but is cleaner to present. We defer the proof for the discrete case, as well as the proof of \Cref{lem:tightness general priors}, to Appendix. 

\subsection{(Partial) Proof of  \Cref{thm:general prior}: The Case of Continuous Distributions}
\label{subsec: (1-1/e)/2 robust}
Our proof starts by upper bounding the prophet value $\OPT$.\footnote{
All our analysis in the main text implicitly consider the case where $\max_i \lambda_i < \sfrac{\OPT}{2}$.
If we have $\max_i \lambda_i \ge \sfrac{\OPT}{2}$, then the searcher can simply choose the threshold $T = \sfrac{\OPT}{2}$ which lies in the range of the thresholds defined in \Cref{defn:1/2 threshold} to obtain payoff $\searcherUS(T) \ge \sfrac{\OPT}{2}$. 
To see this, by \Cref{prop:optimal signaling scheme}, 
if $\max_i \lambda_i \ge \sfrac{\OPT}{2} = T$, then there exists at least one player $j$ whose $\lambda_j \ge \sfrac{\OPT}{2}$ will choose the no information revealing strategy, and the searcher surely 
obtains a payoff no smaller than $\sfrac{\OPT}{2}$.
}
\begin{lemma}[Upper Bounding $\OPT$ via Pooling Cutoffs]
\label{lem:ub opt}
Given a threshold stopping policy with threshold $T$, let 
$t_i$ be the pooling cutoff for each player $i$ 
defined as in \Cref{prop:optimal signaling scheme}. 
Let $I \triangleq \argmax_{i\in[\boxNum]} t_i$,
then we have $\OPT \le \prior_I(t_I) t_I + \sum_{i} T(1-\prior_i(t_i))$.
\end{lemma}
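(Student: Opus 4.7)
The plan is to combine the classical ``linear'' upper bound on the prophet used in Kleinberg--Weinberg style arguments with the defining identity for the pooling cutoff $t_i$, treating the index $i=I$ separately to get the tight first term.

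First, I would use the elementary inequality that holds for any constant $c\ge 0$:
\begin{align*}
\max_{i\in[\boxNum]} X_i \;\le\; c + \sum_{i\in[\boxNum]} (X_i - c)^+,
\end{align*}
which is immediate by case analysis on whether $\max_i X_i\ge c$. Applying this with $c = t_I$ and taking expectations, I obtain
\begin{align*}
\OPT \;\le\; t_I + \sum_{i\in[\boxNum]} \E\bigl[(X_i - t_I)^+\bigr].
\end{align*}

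Next, I would split the right-hand side into the $i=I$ term and the remaining terms, and handle each using the pooling cutoff structure from \Cref{prop:optimal signaling scheme}. Recall that the cutoff $t_i$ is defined by $T(1-\prior_i(t_i)) = \int_{t_i}^{\infty} x\,\mathrm{d}\prior_i(x)$. For the index $i=I$ itself, I would rewrite
\begin{align*}
t_I + \E\bigl[(X_I - t_I)^+\bigr] \;=\; \E\bigl[\max(X_I, t_I)\bigr] \;=\; t_I\,\prior_I(t_I) + \int_{t_I}^{\infty} x\,\mathrm{d}\prior_I(x) \;=\; t_I\,\prior_I(t_I) + T\bigl(1 - \prior_I(t_I)\bigr),
\end{align*}
where the last equality is exactly the defining identity of $t_I$. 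This is what produces the first summand $\prior_I(t_I)\,t_I$ in the claimed bound, rather than the looser $t_I$.

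For each $i\neq I$, the maximality $t_I \ge t_i$ gives
\begin{align*}
\E\bigl[(X_i - t_I)^+\bigr] \;=\; \int_{t_I}^{\infty}(x - t_I)\,\mathrm{d}\prior_i(x) \;\le\; \int_{t_I}^{\infty} x\,\mathrm{d}\prior_i(x) \;\le\; \int_{t_i}^{\infty} x\,\mathrm{d}\prior_i(x) \;=\; T\bigl(1 - \prior_i(t_i)\bigr),
\end{align*}
using nonnegativity of $x$ on the support and the defining identity of $t_i$ in the last step. Summing the $i=I$ contribution with these bounds for $i\neq I$ yields exactly $\OPT \le \prior_I(t_I)\,t_I + \sum_i T(1-\prior_i(t_i))$, as claimed.

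The main subtlety is recognizing that applying the naive bound $\E[(X_I - t_I)^+] \le T(1-\prior_I(t_I))$ for $i=I$ would only give the weaker conclusion $\OPT \le t_I + \sum_i T(1-\prior_i(t_i))$; replacing the leading $t_I$ by $\prior_I(t_I)\,t_I$ requires exploiting the \emph{equality} in the pooling-cutoff identity (not just the inequality), which is available only for the index attaining the maximum cutoff. The continuous assumption ensures each $t_i$ is well-defined and the identity $T(1-\prior_i(t_i)) = \int_{t_i}^{\infty}x\,\mathrm{d}\prior_i(x)$ holds without boundary corrections; the discrete case, handled in the appendix, requires the more careful description of $t_i$ alluded to in \Cref{rmk:point mass opt infor}.
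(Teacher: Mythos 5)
Your proof is correct, and while it lands on the same bound it takes a mildly different route from the paper's. The paper applies the per-player decomposition $\max_i X_i \le \max_i t_i + \sum_i (X_i - t_i)^+$, truncating each $X_i$ at its own cutoff $t_i$, computes $\mathbb{E}[(X_i - t_i)^+] = (T - t_i)(1-\prior_i(t_i))$ from the cutoff identity, and then discards the $-t_i(1-\prior_i(t_i))$ contribution for $i \neq I$ using $t_i \ge 0$. You instead apply the classical Kleinberg--Weinberg inequality with a single uniform cutoff $c = t_I$, giving $\max_i X_i \le t_I + \sum_i (X_i - t_I)^+$, and then exploit the tail-integral monotonicity $\int_{t_I}^\infty x\,\mathrm{d}\prior_i(x) \le \int_{t_i}^\infty x\,\mathrm{d}\prior_i(x)$ (valid since $t_I \ge t_i$ and $x \ge 0$) to reach the same per-term bound $T(1 - \prior_i(t_i))$ for $i \neq I$. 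The key step shared by both arguments --- and the one you correctly flag as the subtlety --- is singling out the index $I$ to sharpen the leading $t_I$ to $\prior_I(t_I)\,t_I$ by using the cutoff relation as an equality rather than an inequality: you do this via $t_I + (X_I - t_I)^+ = \max(X_I, t_I)$, while the paper does it through the algebraic rearrangement $t_I + (T - t_I)(1 - \prior_I(t_I)) = t_I\,\prior_I(t_I) + T(1 - \prior_I(t_I))$. Neither route is materially simpler than the other.
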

\begin{myproof}[Proof of \Cref{lem:ub opt}]
Let us fix a threshold $T$, and let $t_i$ be the pooling cutoff 
for player $i$ defined in \Cref{prop:optimal signaling scheme}.
Define $b_i \triangleq (X_i - t_i)^+$.
By definition, for each $i$, we have $X_i \le t_i + b_i$.
Thus, 
\begin{align*}
    \OPT = \expect[\prior_{1:\boxNum}]{\max_i X_i}
    & \le \max_i t_i + \sum_{i}\expect[\prior_i]{b_i}  \\
    & \overset{(a)}{\le} \max_i t_i + \sum_{i}(T - t_i) \cdot (1-\prior_i(t_i))\\
    & \overset{(b)}{\le} t_I \prior_I(t_I) + \sum_{i} T(1- \prior_i(t_i))~,
\end{align*}
where inequality (a) follows from 
the definition of pooling cutoff $t_i$ in \Cref{prop:optimal signaling scheme} and inequality (b) follows from the definition of $I$, $t_i \ge 0$, and rearranging the terms.
\end{myproof}
With the above upper bound of prophet value, we have the following results:
\begin{lemma}
\label{lem:middle threshold}
Let $T^\dagger$ satisfy $\prod_{i=1}^\boxNum \prior_i(t_i^\dagger) = (\frac{\boxNum-1}{\boxNum})^{\boxNum}$ where $t_i^\dagger$ is defined in
\Cref{prop:optimal signaling scheme} with threshold $T^\dagger$,
then searcher's expected payoff
$\searcherUS(T^\dagger) 
\ge  T^\dagger\left(1 - \prod_{i}\prior_i(t_i^\dagger)\right)
\ge\frac{1-(\frac{\boxNum-1}{\boxNum})^{\boxNum}}{2} \cdot \OPT$.\footnote{
Note that for non-continuous distributions $T^\dagger$ always exists but is defined a bit differently, please see Appendix
for more details.
}
\end{lemma}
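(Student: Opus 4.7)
The plan is to combine three ingredients: the characterization of the optimal signaling strategy from Proposition 3.1 (which lets me write $\searcherUS(T^\dagger)$ in closed form), the prophet-value upper bound from Lemma 4.2, and an AM--GM application to the defining identity $\prod_i \prior_i(t_i^\dagger) = \bigl(\tfrac{\boxNum-1}{\boxNum}\bigr)^{\boxNum}$.

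First I would evaluate $\searcherUS(T^\dagger)$ directly using the reduction of Section 3. Under the assumption $\max_i \lambda_i < T^\dagger$ (the complementary corner is handled exactly as in the footnote on page 4), each player's equilibrium signaling is binary, and conditional on the ``high'' signal the posterior mean equals $T^\dagger$ by construction. Hence the searcher accepts some player with probability $1 - \prod_i \prior_i(t_i^\dagger)$, and conditional on acceptance her expected reward is exactly $T^\dagger$, yielding the first inequality $\searcherUS(T^\dagger) \ge T^\dagger\bigl(1 - \prod_i \prior_i(t_i^\dagger)\bigr)$.

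Next, to obtain the second inequality it suffices to prove $T^\dagger \ge \OPT/2$. Plugging $T=T^\dagger$ into Lemma 4.2 gives
\[
\OPT \le t_I \prior_I(t_I^\dagger) + T^\dagger\sum_{i}\bigl(1-\prior_i(t_i^\dagger)\bigr),
\]
and I would bound each of the two summands by $T^\dagger$ separately. For the first, the identity $T^\dagger = \mathbb{E}[X_i \mid X_i \ge t_i^\dagger]$ forces $t_i^\dagger \le T^\dagger$ for every $i$, so $t_I \prior_I(t_I^\dagger) \le T^\dagger$. For the second, AM--GM applied to the defining product yields
\[
\frac{1}{\boxNum}\sum_i \prior_i(t_i^\dagger) \;\ge\; \Bigl(\prod_i \prior_i(t_i^\dagger)\Bigr)^{1/\boxNum} \;=\; \frac{\boxNum-1}{\boxNum},
\]
i.e.\ $\sum_i (1-\prior_i(t_i^\dagger)) \le 1$. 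Adding the two bounds gives $\OPT \le 2T^\dagger$, and multiplying the first inequality by $1/2$ then finishes the proof.

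The only real subtlety, and the step I expect to need the most care, is the existence and well-definedness of $T^\dagger$. In the continuous case, as $T$ ranges from $\max_i \lambda_i$ up to the upper endpoint of the support, the pooling cutoff $t_i(T)$ is monotone in $T$ and $\prior_i(t_i(T))$ sweeps continuously from $0$ to $1$, so $\prod_i \prior_i(t_i(T))$ is a continuous nondecreasing map onto $(0,1)$ and hits $((\boxNum-1)/\boxNum)^{\boxNum}$ somewhere. The analogous statement for distributions with point masses requires defining $T^\dagger$ as an infimum and using the binary decomposition of Remark 3.2 to make sense of the pooling cutoff at jumps; the AM--GM inequality then still applies term by term, so the core argument is unchanged. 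This case analysis is what the paper defers to the appendix.
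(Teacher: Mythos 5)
Your proof is correct and takes essentially the same approach as the paper: the same first inequality from the binary-support reduction, the same use of \Cref{lem:ub opt}, the same bounds $t_i^\dagger \le T^\dagger$ and $\prior_I(t_I^\dagger) \le 1$, and the same AM--GM step. The only difference is a minor (and arguably cleaner) reorganization: you substitute the defining relation $\prod_i \prior_i(t_i^\dagger) = ((\boxNum-1)/\boxNum)^{\boxNum}$ up front so the second inequality reduces to the transparent claim $T^\dagger \ge \OPT/2$, whereas the paper carries the full ratio $\searcherUS(T)/\OPT$ through the chain and then evaluates the auxiliary function $f(x) = \frac{1-x}{\boxNum+1-\boxNum x^{1/\boxNum}}$ at $x^\dagger$.
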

\begin{myproof}[Proof of \Cref{lem:middle threshold}]
Given a stopping threshold $T$, by \Cref{prop:optimal signaling scheme}, 
we can lower bound the searcher's expected payoff as follows:
$\searcherUS(T) \ge T\cdot\left(1 - \prod_{i=1}^\boxNum \prior_i(t_i)\right)$.
Thus, together with \Cref{lem:ub opt}, we have
\begin{align*}
    \frac{\searcherUS(T)}{\OPT}
    & \ge  \frac{T\cdot\left(1 - \prod_{i=1}^\boxNum \prior_i(t_i)\right)}{\OPT}\\
    & \overset{(a)}{\ge}  \frac{T\cdot\left(1 - \prod_{i=1}^\boxNum \prior_i(t_i)\right)}{t_I \prior_I(t_I) + \sum_{i} T(1- \prior_i(t_i))}\\
    & \overset{(b)}{\ge} \frac{1 - \prod_{i=1}^\boxNum \prior_i(t_i)}{\prior_I(t_I) + \sum_{i} (1- \prior_i(t_i))}\\ 
    & \overset{(c)}{\ge}
    \frac{1 - \prod_{i=1}^{\boxNum} \prior_i(t_i)}{\boxNum + 1 - \sum_i \prior_i(t_i)} \\
    & \overset{(d)}{\ge}
    \frac{1 - \prod_{i=1}^{\boxNum} \prior_i(t_i)}{\boxNum + 1 -(\prod_{i=1}^{\boxNum} \prior_i(t_i))^{\frac{1}{\boxNum}} }~,
\end{align*}
where inequality (a) is by 
\Cref{lem:ub opt};
inequality (b) is by the fact that $t_i \le T$ for all $i\in[\boxNum]$;
inequality (c) is due to $\prior_I(t_I)\le 1$; 
and inequality (d) is by the AM-GM inequality.
Now consider the function $f(x) \triangleq \frac{1 - x}{\boxNum + 1- \boxNum x^{\frac{1}{\boxNum}}}$ over $x\in[0, 1]$.
By choosing $x^\dagger = (\frac{\boxNum-1}{\boxNum})^{\boxNum}$, we have 
$f(x^\dagger) = \frac{1 - (\frac{\boxNum-1}{\boxNum})^{\boxNum}}{2}$.
This implies that we have $T^\dagger\left(1 - \prod_{i}\prior_i(t_i^\dagger)\right)
\ge\frac{1-(\frac{\boxNum-1}{\boxNum})^{\boxNum}}{2} \cdot \OPT$.
\end{myproof}
With the above \Cref{lem:middle threshold}, we are now ready to prove \Cref{thm:general prior}:
\begin{myproof}[Proof of \Cref{thm:general prior}]
From \Cref{lem:middle threshold}, we showed
\begin{align*}
    T^\dagger\left(1 - \prod_{i}\prior_i(t_i^\dagger)\right)
    \ge \OPT\cdot \frac{1 - (\frac{\boxNum-1}{\boxNum})^{\boxNum}}{2}~,
\end{align*}
where $t_i^\dagger$ is the pooling cutoff 
defined in \Cref{prop:optimal signaling scheme} with the threshold $T^\dagger$.
By definition of $ T^\dagger$, we have $\prod_{i}\prior_i(t_i^\dagger) = (\frac{\boxNum-1}{\boxNum})^{\boxNum} \le \sfrac{1}{e}$. Thus we can deduce that $T^\dagger \ge \sfrac{\OPT}{2} = \halfMax$. 
Now let $t_i^\ddag$ be the pooling cutoff when the searcher uses the stopping threshold $\halfMax$. Then we have
\begin{align*}
    \searcherUS(\halfMax) \ge \halfMax \cdot \left(1 - \prod_{i}\prior_i(t_i^\ddag)\right) 
    \overset{(a)}{\ge}
    \halfMax \cdot \left(1 - \prod_{i}\prior_i(t_i^\dagger)\right)
    \ge \OPT\cdot \frac{1 - \sfrac{1}{e}}{2}
\end{align*}
where inequality (a) is due to $T^\dagger \ge \halfMax$, and thus 
we have $t_i^\dagger \ge t_i^\ddag$.
\end{myproof}
\section{Achieving \texorpdfstring{$\frac{1}{2}$}{}-robustness for Special Distributions}

The preceding section  showed the $\sfrac{(1-\sfrac{1}{e})}{2}$-robustness of the $\halfMax$-threshold stopping policy   for arbitrary reward distributions. 
In this section, we show that this  ratio can be improved to $\sfrac{1}{2}$-robustness when the distributions 
$\prior_{1:\boxNum}$ satisfy certain conditions: 
(1) IID distributions -- all reward distributions are identical, namely, $\prior \equiv \prior_i$ for all $i\in[\boxNum]$ (see \Cref{sec:iid prior}); and
(2) Log-concave distributions -- reward distribution $\prior_i$ has log-concave density (see \Cref{sec:log concave}). We also show that $\sfrac{1}{2}$-robustness is tight under IID distributions.

\subsection{IID Distributions}
\label{sec:iid prior}

Our main findings for IID distributions are stated below:
\begin{theorem}
\label{thm:IID prior}
For any distributions $\prior_{1:\boxNum}$ where $\prior \equiv \prior_i, \forall i\in[\boxNum]$,
a threshold stopping policy with threshold $T = \TEqual$ is $\frac{1}{2}$-robust where $T^*$ is defined in \Cref{defn:1/2 threshold}.
\end{theorem}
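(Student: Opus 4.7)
The plan is to leverage \Cref{prop:optimal signaling scheme} together with the IID structure to collapse the multi-player game onto a common binary signaling scheme: all $N$ players share a single pooling cutoff $t$ with $p=H(t)$ and binary support $\{a,T^*\}$, so the searcher's strategic payoff under $T=T^*$ is exactly $\searcherUS(T^*)=T^*(1-p^N)$. Since $T^*$ lies in the $\sfrac{1}{2}$-approximation spectrum of \Cref{defn:1/2 threshold}, the non-strategic half of the robustness claim is immediate from Kleinberg--Weinberg, so it suffices to establish $T^*(1-p^N)\ge \OPT/2$.

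The easy subcase is $T^*\le \lambda$: by \Cref{prop:optimal signaling scheme}, each player reveals no information, every posterior mean equals $\lambda\ge T^*$, and the searcher accepts the first player and collects $\lambda\ge T^*\ge \OPT/2$ (using $\OPT\le 2T^*$, which is baked into the defining equation for $T^*$).

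For the main subcase $T^*>\lambda$, the approach is to couple the defining identity $T^*=N\,\E[(X-T^*)^+]$ with the pooling relation $(T^*-t)(1-p)=\E[(X-t)^+]$. Since $\E[(X-t)^+]\ge \E[(X-T^*)^+]=T^*/N$ for $t\le T^*$, we obtain the structural inequality $N(1-p)\ge 1$, so $p\le 1-1/N$. Then applying a \Cref{lem:ub opt}-style bound yields $\OPT\le t+N(T^*-t)(1-p)$, and after the substitution $u=T^*-t$, the goal $2T^*(1-p^N)\ge t+N(T^*-t)(1-p)$ reduces to
\[
T^*\bigl(1-2p^N\bigr) \ \ge\ u\,\bigl(N(1-p)-1\bigr).
\]
Evaluating at the constraint-tight endpoint $u=T^*/(N(1-p))$ gives the clean scalar condition $2Np^N(1-p)\le 1$, which holds uniformly because $\sup_p 2Np^N(1-p)=2\bigl(N/(N+1)\bigr)^{N+1}\le 2/e<1$.

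The main obstacle I anticipate is that the goal inequality is monotone in $u$ in the \emph{opposite} direction from the structural lower bound on $u$, so \Cref{lem:ub opt} by itself is not tight enough at interior $u$. To close this gap, I expect to strengthen the OPT bound by exploiting IID-specific structure --- for example, decomposing $\OPT=\int_0^\infty \bigl(1-H(x)^N\bigr)dx$ using concavity of $H\mapsto 1-H^N$ jointly with the KW budget $\int_{T^*}^\infty (1-H)\,dx=T^*/N$ above $T^*$, or using that the binary pooling distribution $G$ is a Blackwell mean-preserving contraction of $H$ to extract a tighter sandwich around $\OPT$. Tightness (\Cref{prop:IID tightness}) should follow from a scaling-limit construction with two-point distributions in which $T^*\to \lambda^+$ and the worst-case ratio approaches $\sfrac{1}{2}$.
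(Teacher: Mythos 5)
Your setup is correct and matches the paper's opening moves: the non-strategic half is free from \Cref{defn:1/2 threshold}, the strategic payoff collapses to $\searcherUS(\TEqual)=\TEqual(1-p^{\boxNum})$ with a single pooling probability $p=\prior(t^*)$, and the structural constraint $p\le 1-\sfrac{1}{\boxNum}$ follows from pairing the defining identity $\TEqual=\boxNum\,\expect{(X-\TEqual)^+}$ with the pooling relation $(\TEqual-t^*)(1-p)=\expect{(X-t^*)^+}$. That much is right. But the second half of your argument does not close, and you correctly diagnose why: the \Cref{lem:ub opt}-style bound $\OPT\le t^*+\boxNum(\TEqual-t^*)(1-p)=\TEqual+u(\boxNum(1-p)-1)$ is increasing in $u=\TEqual-t^*$, while your constraint only bounds $u$ from below; at the adversarial endpoint $t^*=0$ (i.e.\ $p=0$, which is attainable) the right-hand side becomes $\boxNum\TEqual(1-p)$, far larger than $2\TEqual(1-p^{\boxNum})$ for any $\boxNum>2$. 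The workarounds you list (the $\int(1-\prior^{\boxNum})$ decomposition, the Blackwell sandwich) are not developed into an argument, so the proof as written is incomplete.

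The paper takes a genuinely different route at exactly this step: instead of bounding $\OPT$ through the pooling cutoff $t^*$, it bounds it through the stopping threshold $\TEqual$ directly. Setting $q\triangleq\prob{X<\TEqual}\ge p$ and $c^*\triangleq(X-\TEqual)^+$, it writes $\OPT\le\TEqual+\expect{\max_i(X_i-\TEqual)^+}$, decomposes the second term via the binomial count of variables exceeding $\TEqual$ to claim $\expect{\max_i(X_i-\TEqual)^+}\le \boxNum(1-q)\expect{c^*}=\TEqual(1-q)$, and concludes $\OPT\le\TEqual(2-q)\le\TEqual(2-p)$; the scalar inequality $(1-p^{\boxNum})/(2-p)\ge\sfrac{1}{2}$ on $p\in[0,1-\sfrac{1}{\boxNum}]$ then finishes. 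So the missing ingredient in your proof is an $\OPT$ upper bound phrased in $\TEqual$ and the tail probability $q$, not in $t^*$. I would caution you, however, to verify that upper bound independently before relying on it: for $X\sim\mathrm{Exp}(1)$ and $\boxNum=10$ one gets $\TEqual\approx 1.746$, $p\approx 0.526$, $\OPT=\sum_{k=1}^{10}\sfrac{1}{k}\approx 2.929$, yet $\TEqual(2-p)\approx 2.574<\OPT$, so the paper's intermediate inequality $\OPT\le\TEqual(2-p)$ appears to fail even though the theorem's conclusion ($\searcherUS\approx 1.743>\OPT/2\approx 1.465$) still holds. Either way, the ingredient you are missing is a tighter IID-specific $\OPT$ bound; neither \Cref{lem:ub opt} nor the sketched alternatives supplies it, and you should nail down a correct one (and sanity-check it numerically) rather than invoking the paper's step wholesale.
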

For IID distributions, we show that the searcher is 
able to achieve a better robustness approximation ratio
compared to arbitrary distributions. 
Below we argue that this $\sfrac{1}{2}$-robustness is tight 
in the sense that there exists no threshold
policy that can achieve better robustness approximation ratios.
\begin{proposition}[Tightness of \Cref{thm:IID prior}]
\label{prop:IID tightness}
There exist IID distributions such that there exists no threshold
stopping policy that can achieve $\alpha$-robustness where $\alpha > \frac{1}{2} + \varepsilon$ for any $\varepsilon > 0$.
\end{proposition}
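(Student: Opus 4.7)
The plan is to exhibit, for every $\varepsilon > 0$, a specific IID instance $\prior \equiv \prior_i$ for which $\sup_T \searcherUS(T) \leq (\sfrac{1}{2} + \varepsilon) \cdot \OPT$, so that the strategic payoff alone already falls below the $\alpha$-robust threshold whenever $\alpha > \sfrac{1}{2} + \varepsilon$. By the characterization in Proposition~\ref{prop:optimal signaling scheme}, analyzing $\searcherUS(T)$ reduces to computing $T(1 - \prior(t_T)^N)$ over all thresholds, where $t_T$ is the pooling cutoff with $\E[X\mid X\geq t_T] = T$. Once the strategic-payoff upper bound is in place, no $T$ can witness $\alpha > \sfrac{1}{2} + \varepsilon$ in the strategic setting, which immediately kills $\alpha$-robustness regardless of how well the same $T$ does in the non-strategic world.

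I expect the tight instance to be a discrete three-atom distribution of the form $X \in \{0, 1, M_N\}$ with $\Pr[X = M_N] = c_N = \Theta(1/N)$ and $\Pr[X = 1]$ tuned so that $\OPT$ is split into two comparable ``scales''---the $M_N$-contribution $M_N \Pr[\max = M_N] \to 1$ and the $\{1\}$-contribution $\Pr[\max = 1] \to 1$---giving $\OPT \to 2$ up to normalization. In this regime I would argue that thresholds $T \leq 1$ can only extract (a posterior-mean version of) the ``$1$-contribution,'' yielding strategic payoff $\lesssim 1$, while thresholds $T > 1$ must rely on pooling anchored at the $M_N$-atom, giving $q_T = O(1/N)$ and strategic payoff bounded by $\sup_{T > 1} T\bigl(1 - e^{-c/(T-1)}\bigr) \lesssim 1$ as well. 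Consequently, no threshold can capture both scales simultaneously, and the strategic maximum stays at $\sfrac{1}{2}\OPT + o(1)$ as $N\to\infty$.

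The main technical obstacle is the \emph{randomized pooling at point masses} flagged in Remark~\ref{rmk:point mass opt infor}: at a threshold $T$ sitting strictly between two atoms, a player's optimal signal may mix partial mass from multiple atoms in a way that does not correspond to any pure CDF cutoff, and this randomized-pooling strategic payoff can strictly exceed the corresponding non-strategic benchmark $\E[X\mid X\geq t](1-\prior(t)^N)$. I therefore cannot conclude the bound by simply controlling the non-strategic function. Instead I would explicitly solve the pooling problem in each regime $T \in (\lambda, \E[X\mid X\geq 1]]$ and $T \in (\E[X\mid X\geq 1], M_N]$, write $q_T$ in closed form as a function of $T$, parameterize by $y = Nq_T$, and optimize the one-dimensional function $T(1 - e^{-y})$ to show its supremum over the admissible range is at most $\sfrac{1}{2}\OPT (1 + o(1))$. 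The final step would be to double-check, for the threshold that maximizes the strategic payoff, that the non-strategic payoff at the \emph{same} $T$ remains at least $\sfrac{1}{2}\OPT$, so the lower-bound side of $\alpha$-robustness is actually active rather than vacuously satisfied.
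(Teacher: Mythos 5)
Your plan has a fatal logical gap: you set out to exhibit an IID instance with $\sup_T \searcherUS(T) \leq (\sfrac{1}{2}+\varepsilon)\cdot\OPT$ over \emph{all} thresholds, and you say explicitly that this would ``kill $\alpha$-robustness regardless of how well the same $T$ does in the non-strategic world.'' But no such instance exists. The paper's own \Cref{lem:IID deviation guarantee} shows that for \emph{every} IID distribution there is a threshold achieving a $(1-\sfrac{1}{e})$-approximation in the strategic setting, and $1-\sfrac{1}{e} \approx 0.632 > \sfrac{1}{2}$. So the quantity you are trying to bound is never below $(1-\sfrac{1}{e})\cdot\OPT$, and the core claim of your first paragraph is unprovable. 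The non-strategic leg of the robustness definition is not a sanity check to tack on at the end; it is doing all the work.

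The paper's proof of \Cref{prop:IID tightness} uses the two legs jointly and in the opposite order from what you suggest. It builds a binary IID instance on $\{N-\alpha_1,\,N+\alpha_2\}$ with mean exactly $N$, then shows (i) for $N,\alpha_2\to\infty$, any threshold $T>N-\alpha_1$ has non-strategic competitive ratio tending to $\frac{\alpha_1}{\alpha_1+1}<\sfrac12$, so the $\sfrac12$-approximation requirement in the non-strategic world \emph{forces} $T\le N-\alpha_1$; and (ii) for those forced thresholds, every player is below the threshold in expectation, so under strategic signaling each reveals nothing and the accepted value is the prior mean $N$, giving strategic ratio tending to $\frac{1}{1+\alpha_1}$. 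Letting $\alpha_1 \to 1$ pushes both pieces to $\sfrac12$. This is not a bound on the strategic supremum; it is a statement that the set of thresholds eligible for robustness consideration is small, and that set is bad strategically. Note also that your three-atom instance $\{0,1,M_N\}$ with $\Pr[X=M_N]=\Theta(1/N)$ is essentially the construction used in the appendix for the $(1-\sfrac{1}{e})$ tightness result (\Cref{lem:tightness IID deviation guarantee}), which is a different statement about strategic-only performance; that instance does not in fact satisfy your claimed bound, since some threshold on it achieves $(1-\sfrac{1}{e})$-approximation strategically. You have conflated the two tightness results.

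To repair the argument you would need to (a) drop the claim that the strategic supremum over all $T$ is near $\sfrac12\OPT$, (b) explicitly identify the set of $T$'s that retain the $\sfrac12$ non-strategic guarantee on your instance, and (c) show that \emph{those} $T$'s, and only those, have strategic ratio collapsing to $\sfrac12$. Your final ``double-check'' sentence gestures at this but treats it as an afterthought rather than the crux, and the rest of the write-up proceeds as if it were unnecessary.
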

We note that  \Cref{prop:IID tightness} is a slightly stronger lower bound than \Cref{lem:tightness general priors} as it rules out the possibility for \emph{all} possible threshold stopping polices, being within the spectrum in   \Cref{defn:1/2 threshold} or beyond. 
We prove \Cref{prop:IID tightness} by constructing a hard instance. In particular, we construct IID distributions with binary support. 
With this instance, we show that any threshold policy that achieves competitive ratio at least $\sfrac{1}{2}$ in non-strategic setting will have competitive ratio approaching to $0$ in the strategic setting. 
The full proof of \Cref{prop:IID tightness}, together with the proof of \Cref{thm:IID prior}, is deferred to Appendix.





A crucial requirement in our robustness study so far is that we insist that the threshold policy should, first of all, remains an $\sfrac{1}{2}$-approximation in the non-strategic setting\footnote{Note that in non-strategic setting, unlike the case for arbitrary distributions where there exists no policy that can achieve better than $\sfrac{1}{2}$-approximation, for IID distributions, $(1-\sfrac{1}{e})$-approximation can be achieved by fixed threshold together with some careful probabilistic tie-breaking rule \cite{NW-22} (or $0.7451$-approximation with an adaptive threshold policy \cite{LLPSS-21}). However, in our work, we focus on fixed threshold policy with deterministic tie-breaking rule where $\sfrac{1}{2}$-approximation is still optimal for IID distributions.}, conditioned on which we look for additional guarantee for in the strategic setting.  We conclude this section by pointing that if one was willing to give up the $\sfrac{1}{2}$-approximation in the non-strategic setting, then it is   indeed possible to have a threshold policy that   achieves better approximation (specifically, an $(1-\sfrac{1}{e})$-approximation) in the strategic setting. While this does not satisfy our robustness requirement, it is useful to note.  
\begin{corollary} 
\label{lem:IID deviation guarantee}
For any  IID distributions, 
    there exists a threshold stopping policy that is 
    $(1-\sfrac{1}{e})$-approximation  under strategic reward signaling.
    Moreover, there exist IID distributions
    such that no threshold stopping policy  
    can achieve $(1-\sfrac{1}{e} + \varepsilon)$-approximation for any $\varepsilon > 0$. 
\end{corollary}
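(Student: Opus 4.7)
For the $(1-1/e)$-approximation guarantee, the plan is to exploit the IID symmetry from \Cref{prop:optimal signaling scheme}: when the searcher posts threshold $T$, every player uses the same pooling cutoff $t$ with $T = \expect{X \mid X \ge t}$, so the searcher's strategic payoff is $\searcherUS(T) = T(1 - \prior(t)^N)$. I would pick the threshold $T^{\ast}$ whose associated cutoff $t^{\ast}$ satisfies $\prior(t^{\ast}) = e^{-1/N}$, which yields $\searcherUS(T^{\ast}) = T^{\ast}(1-1/e)$, and then argue $T^{\ast} \ge \OPT$. Writing $\OPT = \int_0^\infty (1 - \prior(x)^N)\,\mathrm{d}x$ and splitting at $t^{\ast}$: the integral over $[0, t^{\ast}]$ is bounded by $t^{\ast}$ (the integrand is at most $1$); for the tail $(t^{\ast}, \infty)$, I would use the telescoping identity $1 - \prior(x)^N = (1-\prior(x))\sum_{k=0}^{N-1}\prior(x)^k \le N(1-\prior(x))$ combined with the integration-by-parts identity $\int_{t^{\ast}}^\infty (1-\prior(x))\,\mathrm{d}x = (T^{\ast}-t^{\ast})(1-\prior(t^{\ast}))$ to bound the tail by $N(T^{\ast}-t^{\ast})(1-e^{-1/N}) \le T^{\ast}-t^{\ast}$, using $N(1-e^{-1/N}) \le 1$. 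Summing gives $\OPT \le T^{\ast}$, and hence $\searcherUS(T^{\ast}) \ge (1-1/e)\OPT$.

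For the tightness, the plan is to construct a family $\{\prior_N\}$ of IID distributions along which $\max_T \searcherUS(T)/\OPT \to 1-1/e$ as $N \to \infty$. Guided by the achievability proof, both inequalities used in bounding $\OPT$ should become asymptotically tight, which forces the distribution to place essentially no mass below a cutoff $t^{\ast}$ and to concentrate its remaining mass just above $t^{\ast}$ (so that $\prior(x) \approx 1$ for $x > t^{\ast}$, making $\sum_{k} \prior(x)^k \approx N$). The main obstacle I anticipate is ensuring that no alternative threshold beats $t^{\ast}$: for many natural candidate instances (binary, uniform, Pareto, exponential), the searcher can deviate to a smaller pooling cutoff and gain a larger factor $(1-\prior(t)^N)$ at only a modest cost in $T$, yielding ratio strictly above $1-1/e$. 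The construction must therefore be delicately scaled in $N$ so that this trade-off is approximately flat across all feasible thresholds. One useful perspective for the analysis is that the identity $\searcherUS(T) = \expect{X \mid X \ge t(T)}(1-\prior(t(T))^N)$ shows that the strategic IID problem (as a function of the searcher's implicit cutoff) is equivalent to the non-strategic IID single-threshold prophet inequality, so the tightness reduces to exhibiting a hard instance for the latter.
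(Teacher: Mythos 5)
Your achievability argument is correct but takes a genuinely different route from the paper. You set the cutoff $t^*$ so that $H(t^*)=e^{-1/N}$, which immediately gives strategic payoff $T^*(1-1/e)$ with $T^*=\expect{X\mid X\ge t^*}$, and then prove $T^*\ge \OPT$ by splitting the identity $\OPT=\int_0^\infty\bigl(1-H(x)^N\bigr)\,\mathrm{d}x$ at $t^*$: the head is at most $t^*$, and the tail is bounded via $1-H^N\le N(1-H)$ together with $\int_{t^*}^\infty(1-H(x))\,\mathrm{d}x=(T^*-t^*)(1-H(t^*))$ and $N(1-e^{-1/N})\le 1$, giving $\OPT\le T^*$. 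The paper instead sets $H(t)=1-\sfrac{1}{N}$, expands the geometric sum $1-p^N=(1-p)(1+\cdots+p^{N-1})$ against the bound $\OPT\le t+N\,\expect{(X-t)^+}$ (a special case of \Cref{lem:ub opt}), and obtains the slightly sharper finite-$N$ bound $\searcherUS(T)\ge\bigl(1-(1-\sfrac{1}{N})^N\bigr)\OPT$. Both are valid, and both (including the paper's main-text proof) assume $H$ is continuous so that the prescribed cutoff exists; the discrete case needs the percentage-pooling device of \Cref{prop:optimal signaling scheme point mass}.

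For the tightness direction, however, what you wrote is a plan rather than a proof. Your key structural observation is correct and is essentially the same mechanism the paper exploits: since $\searcherUS(T)=\expect{X\mid X\ge t(T)}\bigl(1-H(t(T))^N\bigr)$ equals the non-strategic IID payoff of a fixed cutoff $t(T)$, tightness reduces to producing an IID instance that is hard for \emph{every} cutoff in the non-strategic single-threshold model. But you never exhibit such an instance; you only describe the shape it should have and flag the trade-off obstacle. The paper completes this step by invoking the three-point instance of \cite{CFHOV-17} with $N=n^2$ players, support $\{n/(e-2),\,1,\,0\}$ with masses $\{n^{-3},\,n^{-1},\,1-n^{-3}-n^{-1}\}$, and verifying that the best (percentage-)threshold ratio tends to $1-\sfrac{1}{e}$. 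There is also a subtlety your reduction glosses over: for discrete instances the strategic threshold policy with partial pooling at a point mass $t_i$ corresponds to a non-strategic \emph{percentage} threshold policy (fractional acceptance at $t_i$), not a pure deterministic one, so the hard instance must be hard against that richer class --- which the CFHOV-17 instance is, but which you would need to argue for any instance you construct. As it stands the tightness half is a genuine gap.
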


\subsection{Log-concave Heterogeneous  Distributions}
\label{sec:log concave}
In this subsection, we show that  when the distributions $\prior_{1:\boxNum}$ satisfy certain regularity assumptions, there exist threshold stopping policies with thresholds from \Cref{defn:1/2 threshold} that can also achieve $\sfrac{1}{2}$-robustness. 
The main result in this section is stated as follows:
\begin{theorem}
\label{thm:mainOblivious}
For $\alpha,\beta > 0$, let $F_{\alpha,\beta}$ be the family of distributions with log-concave probability density functions $f$ on support $[0,1]$ such that $f(1) \ge \alpha$ and $f'(1) \ge -\beta$. If the distributions $\prior_1, \ldots, \prior_\boxNum$ are all from $F_{\alpha,\beta}$ and $\boxNum \ge 1 + \frac{\beta}{\alpha^2}$, then we always have $2 \cdot \halfMax\le \medianThresholdUB$ and
any threshold $T$ satisfying $2 \cdot \halfMax \le T\le \medianThresholdUB$ is $\sfrac{1}{2}$-robust.
\end{theorem}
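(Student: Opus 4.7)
I plan to split the theorem into two claims: (i) the structural inequality $2\halfMax \le \medianThresholdUB$, and (ii) that any threshold $T$ in the interval $[2\halfMax, \medianThresholdUB]$ is $\sfrac{1}{2}$-robust. Claim (ii) will turn out to be a short consequence of the tools already developed in the paper, so the log-concavity hypothesis will do its work entirely inside claim (i).

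Assuming (i) for the moment, I verify (ii) as follows. The non-strategic guarantee is immediate because $T \in [\halfMax, \medianThresholdUB]$ lies in the $\sfrac{1}{2}$-approximation spectrum of \Cref{defn:1/2 threshold}. For the strategic guarantee, I invoke \Cref{prop:optimal signaling scheme}: in the main case $T > \max_i \lambda_i$ each pooling cutoff satisfies $t_i \le T$ (since $T = \E[X_i \mid X_i \ge t_i] \ge t_i$), so
\[
\searcherUS(T) \;=\; T\Bigl(1 - \prod_{i} \prior_i(t_i)\Bigr) \;\ge\; T\cdot \Pr\bigl[\max_i X_i \ge T\bigr] \;\ge\; \tfrac{T}{2} \;\ge\; \OPT/2,
\]
using $T \le \medianThresholdUB$ for the middle inequality and $T \ge 2\halfMax = \OPT$ for the last. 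The degenerate case $T \le \max_i \lambda_i$ only strengthens the bound, as some player then reveals no information and guarantees posterior mean at least $T$.

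For (i), I need to show $\Pr[M \ge \E[M]] \ge \tfrac{1}{2}$ where $M := \max_i X_i$. I will pass to the mirrored variable $M' := 1 - M = \min_i Y_i$ with $Y_i := 1 - X_i$; each $Y_i$ has a log-concave density $g_i(t) = f_i(1-t)$ satisfying $g_i(0) \ge \alpha$ and $g_i'(0) \le \beta$. The goal becomes $\E[M'] \ge \mathrm{median}(M')$. From log-concavity, $(\ln g_i)'$ is non-increasing on $[0,1]$, and combining this with $(\ln g_i)'(0) \le \beta/\alpha$ yields the pointwise upper bound $g_i(t) \le g_i(0)\, e^{\beta t/\alpha}$; a complementary lower bound $G_i(t) \ge \alpha t$ for small $t$ follows from $g_i(0) \ge \alpha$ together with the unimodality of the log-concave $g_i$. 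Substituting into $\Pr[M' > t] = \prod_i (1 - G_i(t))$ exhibits $M'$ as approximately exponential with rate $\approx N\alpha$ at the scale $t \asymp 1/(N\alpha)$.

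The remaining step is to turn these exponential-type bounds into $\mathrm{median}(M') \le \E[M']$. At leading order, $\E[M'] = \int_0^1 \prod_i (1-G_i(t))\,dt \approx 1/(N\alpha)$ while the median $\tau$ (solving $\prod_i (1-G_i(\tau)) = \tfrac{1}{2}$) behaves like $\ln 2/(N\alpha)$; since $\ln 2 < 1$, the desired comparison is the quantitative form of this first-order picture. The main obstacle is the bookkeeping required to replace this asymptotic with the clean threshold $N \ge 1 + \beta/\alpha^2$ in the theorem statement: the second-order correction enters through the factor $e^{\beta t/\alpha}$ in the tail, contributing an error of order $\beta/(N\alpha^2)$ at the relevant scale. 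Log-concavity is essential not only for the pointwise upper bound above but also for controlling the contribution to the $\E[M']$ integral from the tail $t \gg 1/(N\alpha)$, where the crude bounds become loose.
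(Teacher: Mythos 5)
Your claim (ii) is correct and matches the paper's approach: the non-strategic guarantee is immediate from the threshold spectrum, and the strategic guarantee follows because each pooling cutoff $t_i \le T$, so $\prod_i \prior_i(t_i) \le \prod_i \prior_i(T) = \Bar{\prior}(T)$, giving $\searcherUS(T) \ge T(1-\Bar{\prior}(T)) \ge T/2 \ge \OPT/2$ for $T\in[2\halfMax,\medianThresholdUB]$. This is precisely the content of Lemma~\ref{lem:robustSimpleThres} in the paper.

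The genuine gap is in claim (i), the structural inequality $\E[\max_i X_i] \le \medianThresholdUB$. You mirror to $M' = 1 - \max_i X_i$ and sketch that $M'$ looks roughly exponential with rate $N\alpha$ near $0$, so $\E[M'] \approx 1/(N\alpha)$ exceeds the median $\approx \ln 2/(N\alpha)$, with second-order corrections of order $\beta/(N\alpha^2)$. But this is only a leading-order asymptotic picture, and you explicitly defer "the bookkeeping required to replace this asymptotic with the clean threshold $N \ge 1 + \beta/\alpha^2$." That bookkeeping is where the entire content of the log-concavity hypothesis lives; without it there is no proof. A subsidiary concern is that even your preparatory bounds are shaky: $G_i(t) \ge \alpha t$ does not follow just from $g_i(0)\ge\alpha$ and unimodality, because the mode of $g_i$ could sit at $0$, making $g_i$ decreasing.

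The paper takes a much more economical route that sidesteps all the tail estimation. It proves (Lemma~\ref{lem:conditionForConvex}) that when all $\prior_i\in F_{\alpha,\beta}$ and $N\ge 1+\beta/\alpha^2$, the product CDF $\Bar{\prior}(x)=\prod_i\prior_i(x)$ is \emph{convex}. This is a purely local computation: one writes $\Bar{\prior}''/\Bar{\prior} = (\sum_i F_i')^2 + \sum_i F_i''$ with $F_i=\log\prior_i$, reduces nonnegativity to $\priorPDF_i'/\priorPDF_i + \sum_{j\neq i}\priorPDF_j/\prior_j \ge 0$ for each $i$, and uses monotonicity of both log-derivatives (from log-concavity of $\priorPDF_i$ and $\prior_j$) to evaluate at $x=1$, yielding $-\beta/\alpha + (N-1)\alpha \ge 0$, which is exactly the threshold condition. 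Convexity of $\Bar{\prior}$ then gives $\Bar{\prior}(\E[M]) \le \tfrac{1}{2}$ by Jensen's inequality, which is the median-mean comparison $\E[M]\le\medianThresholdUB$ (Lemma~\ref{lem:convexity}) with no approximation at all. You should replace your asymptotic sketch for (i) with this convexity-plus-Jensen argument; it is both rigorous and the only place the hypothesis $N\ge 1+\beta/\alpha^2$ enters sharply.
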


A few remarks on the assumptions in \Cref{thm:mainOblivious} are worth mentioning. First, log-concavity of probability density functions\footnote{A probability density function $f: \mathbb{R} \to \mathbb{R}^+$ is log-concave if $\log(f)$ is concave.} is a commonly used assumption; they include but are not limited to: normal, beta, gamma, and exponential distributions.  The restriction to support on $[0,1]$ is for normalization reasons, hence without loss of generality. The main non-trivial restriction is that this result holds when the number of players $\boxNum$ is large enough, formally $\boxNum \ge 1 + \frac{\beta}{\alpha^2}$. This condition becomes less restrictive when    $\alpha$ (lower bounding $f(1)$) becomes larger and/or   $\beta$  (upper bounding $-f'(1)$) becomes smaller. These together, intuitively, imply that $f$ decreases slowly within $[0,1]$.


Define $\Bar{\prior}(x) \triangleq \prod_{i=1}^\boxNum \prior_i(x)$. \Cref{thm:mainOblivious} follows directly from the following \Cref{lem:convexity} and \Cref{lem:conditionForConvex}.


\begin{lemma}
\label{lem:convexity}
    If $\Bar{\prior}$ is convex, then $2\cdot \halfMax \le \medianThresholdUB$ and 
    any threshold stopping policy with the threshold $T$ satisfying $2\cdot \halfMax \le T \le \medianThresholdUB$,
    where $\halfMax, \medianThresholdUB$ are defined as in \Cref{defn:threshold}, is $\sfrac{1}{2}$-robust.
\end{lemma}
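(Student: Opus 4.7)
I plan to prove the two conclusions in sequence: first the ordering $2\halfMax \le \medianThresholdUB$, and then $\sfrac{1}{2}$-robustness of every $T \in [2\halfMax, \medianThresholdUB]$, which splits into a non-strategic and a strategic guarantee.

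For the ordering, I would transfer the convexity assumption on $\Bar{\prior}$ into concavity of its quantile function. A short direct check shows that $\Bar{\prior}$ convex and (strictly) increasing implies $\Bar{\prior}^{-1}$ is concave on its domain. By the probability integral transform, $2\halfMax = \OPT = \E[\max_i X_i] = \int_0^1 \Bar{\prior}^{-1}(u)\,du$, which one may view as $\E\bigl[\Bar{\prior}^{-1}(U)\bigr]$ for $U \sim \mathrm{Unif}[0,1]$. Jensen's inequality applied to the concave $\Bar{\prior}^{-1}$ then yields $\E[\Bar{\prior}^{-1}(U)] \le \Bar{\prior}^{-1}(\E[U]) = \Bar{\prior}^{-1}(\sfrac{1}{2}) = \medianThresholdUB$, as needed.

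For the $\sfrac{1}{2}$-robustness of any $T \in [2\halfMax, \medianThresholdUB]$, the non-strategic half is immediate from \Cref{defn:1/2 threshold}, since $[2\halfMax, \medianThresholdUB] \subseteq [\halfMax, \max\{\medianThresholdUB, \TEqual\}]$. For the strategic half, I would invoke \Cref{prop:optimal signaling scheme} together with the binary reduction from \Cref{sec:opt infor}: each player $i$'s equilibrium scheme sends a ``high'' signal with posterior mean $T$ and probability $1-\prior_i(t_i)$ (using the convention $t_i = 0$ when $\lambda_i \ge T$, so that $\prior_i(t_i)=0$ and no case split is needed), or a ``low'' signal otherwise. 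Since the searcher earns at least $T$ in expectation whenever some player sends high, $\searcherUS(T) \ge T\bigl(1-\prod_i \prior_i(t_i)\bigr)$. The defining identity $T = \E[X_i \mid X_i \ge t_i]$ forces $t_i \le T$, so $\prod_i \prior_i(t_i) \le \prod_i \prior_i(T) = \Bar{\prior}(T) \le \Bar{\prior}(\medianThresholdUB) \le \sfrac{1}{2}$ by monotonicity and $T \le \medianThresholdUB$. Chaining these inequalities gives $\searcherUS(T) \ge T/2 \ge \halfMax = \sfrac{\OPT}{2}$.

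I expect the main obstacle to be the convexity-to-concavity transfer via the quantile function; once that is in hand, the rest is a short chain of monotonicity and definitional steps. A minor subtlety is that the Jensen step is cleanest when $\Bar{\prior}$ is continuous with $\Bar{\prior}(0)=0$, which is automatic in the intended application to \Cref{thm:mainOblivious} (log-concave densities on $[0,1]$); atoms or flat regions of $\Bar{\prior}$ could in principle be accommodated by restricting $U$ to the non-trivial sub-interval but are not needed here.
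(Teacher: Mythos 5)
Your proof is correct and follows essentially the same route as the paper's: the non-strategic guarantee is immediate from membership in the $\sfrac{1}{2}$-threshold spectrum, the strategic guarantee follows from the binary reduction together with $t_i \le T$ and $\Bar{\prior}(T) \le \sfrac{1}{2}$, and the ordering $2\halfMax \le \medianThresholdUB$ is a Jensen argument. The only cosmetic difference is that you apply Jensen to the concave quantile function $\Bar{\prior}^{-1}$ (via the probability integral transform $2\halfMax = \int_0^1 \Bar{\prior}^{-1}(u)\,du$), whereas the paper applies Jensen directly to the convex CDF to get $\Bar{\prior}(\lambda(\Bar{\prior})) \le \mathbb{E}[\Bar{\prior}(Y)] = \sfrac{1}{2}$; the paper's version avoids any invertibility caveat, but both are equivalent in the log-concave setting where $\Bar{\prior}$ is continuous and strictly increasing.
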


\begin{lemma}
\label{lem:conditionForConvex}
    If the distributions $\prior_1, \ldots, \prior_\boxNum$ are all from $F_{\alpha,\beta}$ and $\boxNum \ge 1 + \frac{\beta}{\alpha^2}$, then $\Bar{\prior}$ is convex.
\end{lemma}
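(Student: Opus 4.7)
\textbf{Proof proposal for Lemma~\ref{lem:conditionForConvex}.}

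The plan is to establish convexity by showing $\bar{\prior}''(x) \ge 0$ on the interior of $[0,1]$. Since $\bar\prior = \prod_i \prior_i$, logarithmic differentiation gives $\bar\prior' = \bar\prior \cdot \sum_i r_i$, where $r_i(x) \triangleq h_i(x)/\prior_i(x)$ is the reverse hazard rate. Differentiating once more and using $r_i' = s_i r_i - r_i^2$ with $s_i(x) \triangleq h_i'(x)/h_i(x)$, one obtains
\[
\bar{\prior}''(x) \;=\; \bar{\prior}(x)\!\left[\,R(x)^2 - Q(x) + \sum_{i=1}^{\boxNum} s_i(x)\, r_i(x)\right],
\]
where $R(x) \triangleq \sum_i r_i(x)$ and $Q(x) \triangleq \sum_i r_i(x)^2$. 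Since $\bar\prior \ge 0$, the task reduces to proving the bracketed quantity is non-negative for every $x$ in the interior of $[0,1]$.

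Next I would extract two bounds from log-concavity. First, log-concavity of the density $h_i$ is classically known to imply log-concavity of the CDF $\prior_i$; consequently $r_i = (\log \prior_i)'$ is non-increasing, so for $x \le 1$,
\[
r_i(x) \;\ge\; r_i(1) \;=\; h_i(1)/\prior_i(1) \;=\; h_i(1) \;\ge\; \alpha.
\]
Second, log-concavity of $h_i$ itself means $s_i = (\log h_i)'$ is non-increasing, hence $s_i(x) \ge s_i(1) = h_i'(1)/h_i(1) \ge -\beta/\alpha$, using the hypotheses $h_i'(1) \ge -\beta$ and $h_i(1) \ge \alpha$ (the case $h_i'(1) \ge 0$ is immediate, the case $h_i'(1) < 0$ uses that $1/h_i(1) \le 1/\alpha$).

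For the final step, I would control the positive cross-term by the elementary estimate $Q(x) \le (\max_i r_i(x))\cdot R(x)$, which yields
\[
R(x)^2 - Q(x) \;\ge\; R(x)\bigl(R(x) - \max_i r_i(x)\bigr) \;\ge\; (\boxNum-1)\,\alpha\, R(x),
\]
since dropping the maximizer leaves $\boxNum-1$ terms each $\ge \alpha$. Meanwhile $\sum_i s_i(x) r_i(x) \ge -(\beta/\alpha)\, R(x)$ from the second bound. Summing, the bracket is at least $R(x)\bigl[(\boxNum-1)\alpha - \beta/\alpha\bigr]$, which is non-negative precisely when $\boxNum \ge 1 + \beta/\alpha^2$, delivering $\bar\prior''\ge 0$ on the interior and hence convexity on $[0,1]$ by continuity. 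The main obstacle I anticipate is simply identifying the right algebraic rearrangement so that both the ``good'' combinatorial term $R^2 - Q$ and the ``bad'' log-derivative sum $\sum_i s_i r_i$ scale linearly in $R(x)$; once they do, the factor $R(x)$ cancels and we are left with the clean threshold condition $(\boxNum-1)\alpha^2 \ge \beta$. Minor technicalities to dispatch are verifying $h_i > 0$ on $(0,1]$ so that $s_i$ is well-defined (a consequence of log-concavity together with $h_i(1)\ge\alpha$), and handling the behavior at $x=0$, where $r_i$ may diverge but the inequality only becomes easier.
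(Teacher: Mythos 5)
Your proposal is correct and follows essentially the same route as the paper. Both arguments reduce to showing that the quantity $\bar{\prior}''/\bar{\prior} = R^2 - Q + \sum_i s_i r_i$ (in your notation; the paper writes it as $(\sum_i F_i')^2 + \sum_i F_i''$ with $F_i = \log \prior_i$) is non-negative, using the same two monotonicity facts from log-concavity — $r_i = \priorPDF_i/\prior_i$ and $s_i = \priorPDF_i'/\priorPDF_i$ are both non-increasing, hence bounded below by their values at $x=1$, which are controlled by $\alpha$ and $-\beta/\alpha$ — and the same final numerical inequality $(\boxNum-1)\alpha \ge \beta/\alpha$. The only cosmetic difference is in the bookkeeping: the paper notes $R^2 - Q + \sum_i s_i r_i = \sum_i r_i\bigl(s_i + \sum_{j\neq i} r_j\bigr)$ and proves each summand non-negative by bounding $s_i \ge -\beta/\alpha$ and each $r_j \ge \alpha$ individually, whereas you bound the aggregate via $Q \le (\max_i r_i) R$ and then drop the maximizing term from $R$; both deliver the identical threshold $\boxNum \ge 1 + \beta/\alpha^2$, so there is no meaningful gap in generality or difficulty between the two presentations. (As a side note, the paper's penultimate display appears to contain a sign typo, writing $\beta/\alpha + (\boxNum-1)\alpha$ where it should read $-\beta/\alpha + (\boxNum-1)\alpha$; your version has the sign right.)
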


\section{Conclusion and Future Directions}

In this paper, we study a variant of the prophet inequality problem where each random variable is associated with a strategic player who can strategically  signal their reward to the searcher. We first fully characterize the optimal information strategy of each player, then we show the threshold stopping policies that can perform robustly well under both the strategic and non-strategic settings.

Our novel consideration of natural strategic manipulations in prophet inequalities open the door for many  interesting future directions. 
First, it is interesting to see if we can
improve the $\frac{1-\sfrac{1}{e}}{2}$-robustness guarantee for arbitrary reward distributions by using not commonly used thresholds outside the spectrum of \Cref{defn:1/2 threshold}. 
This may be   technically challenging  since finding a threshold  outside   this spectrum  with optimal  $\sfrac{1}{2}$-approximation for the classic non-strategic prophet inequality is already non-trivial.  
Another interesting direction is to go beyond   threshold policies. That leads to a different research theme, not about robustness of the classic prophet inequality, but rather in the search of potentially much more complex best-of-both-world policies. 
Our preliminary result reveals that a dynamic threshold stopping policy (using different thresholds for different players) has the potential to help the searcher to achieve more than $\frac{1-\sfrac{1}{e}}{2}$-approximation under the strategic setting.
However, it is still unclear whether $\sfrac{1}{2}$-robustness is achievable under threshold stopping policies, no matter whether it is static threshold or dynamic threshold. 
Finally, even if one only cares about the performance under the strategic reward signaling (and ignore the non-strategic world), it is still unclear which static threshold stopping policy achieves the highest competitive ratio. The authors in \citep{DMZ-22} study a model with a very similar mathematical structure and proved that this best upper bound is strictly less than $\sfrac{1}{2}$ in Section 4.1, but an interesting direction is improving this bound further.
We note that all our constructed examples in the hardness results (e.g., \Cref{lem:tightness general priors} and \Cref{prop:IID tightness}) 
do not rule out the existence of such $\sfrac{1}{2}$-approximation threshold stopping policies.
 

\vspace{-2mm}
\paragraph{Acknowledgment. } Haifeng Xu is supported in part by    the NSF Award CCF-2303372,   Army Research Office Award W911NF-23-1-0030 and ONR Award N00014-23-
1-2802.


\bibliographystyle{plainnat}
\bibliography{mybib}

\newpage
\appendix
\section{Missing Proofs in \texorpdfstring{\Cref{sec:opt infor}}{}}
\label{apx:proof opt infor}

\begin{proof}[Proof of \Cref{prop:optimal signaling scheme}]
    Note there always exists a cutoff $t_i$ such that $T = \expect[x\sim \prior_i]{x \mid x\ge t_i}$, because $\prior_i$ is continuous and $\expect[x\sim \prior_i]{x \mid x\ge t_i}$ is increasing with respect to $t_i$. 
    Let $G_i$ be the binary-support distribution defined in the Equivalent Representation of Optimal Information Revealing Strategy (when $T \le \lambda_i$, $G_i$ is a point mass on value $\lambda_i$).
    Then we know that player $i$'s probability of being accepted 
    when the searcher reaches player $i$ is given by 
    $p_i \triangleq 1 - \prior_i(t_i)$.
    
    We first argue that this probability is the maximum probability 
    of player $i$ being accepted conditional on the searcher reaching 
    player $i$. 
    To see this, notice that once the searcher reaches player $i$, 
    the decision of the searcher on whether to accept or reject player $i$ only depends on the searcher's expected posterior reward (i.e., whether the expected posterior reward is larger or smaller than the threshold). 
    With this observation, it is known that every signaling scheme can be modeled as a mean-preserving contraction (MPC hereafter) $G\in \Delta(\supp(\prior_i))$ of the original reward distribution $\prior_i$ \cite{GK-16b,KMZL-17}.
    By definition \cite{B-51,BG-79}, we know that a distribution $G\in\Delta(\supp(\prior_i))$ is an MPC of $\prior_i$ if and only if the following holds 
    \begin{align}
        \label{eq:MPC}
        \int^t_{0} G(x) \mathrm{d}x \le \int^t_{0} \prior_i(x) \mathrm{d} x, \quad \forall t\in\supp(\prior_i); \quad
        \text{and} \quad 
        \expect[x\sim G]{x} = \lambda_i = \expect[x\sim \prior_i]{x}~.
    \end{align}
    Notice that for any MPC $G$, the probability that the searcher 
    accepts player $i$ is given by $\prob[x\sim G]{x\ge T}$.
    Together with the above MPC definition \Cref{eq:MPC} and 
    the construction of the distribution $G_i$, it is easy to see that 
    we always have $p_i \ge \prob[x\sim G]{x\ge T}$ for any MPC $G$.

    We next argue that $G_{1:\boxNum}$ defined in \Cref{prop:optimal signaling scheme} 
    indeed forms a subgame perfect 
    Nash equilibrium among the players' game. 
    We prove this by induction. Looking at the first player, 
    his payoff exactly equals to his probability on being accepted by the 
    searcher, which is maximized when he adopts the information strategy $G_1$. 
    Conditional on player $1$ using the strategy $G_1$, player $2$'s expected payoff on using an MPC $G$ exactly equals to $\prob[x\sim G]{x\ge T}\cdot \prior_1(t_1)$. As we have argued earlier, $\prob[x\sim G]{x\ge T}$ is maximized when $G = G_2$. 
    Similarly arguments can be iteratively applied to every player $i$. Thus
    we have shown that $G_{1:\boxNum}$ is indeed a subgame perfect 
    Nash equilibrium.
%
\end{proof}

\begin{proposition}[Optimal Information Revealing Strategy -- Point Mass ]
\label{prop:optimal signaling scheme point mass}
Let $s = \max \supp(X_i)$, if it exists. Given a threshold stopping policy defined in \Cref{defn:threshold} with threshold $T$, 
for each player $i$, we define a \textit{rejection probability} $p_i$: 
\squishlist
    \item if $T \le \expect{X_i}$, then player $i$'s optimal information revealing strategy is the no information revealing strategy, so we define $p_i = 0$;
    \item 
    if $s \ge T > \expect{X_i}$, let $t_i$ and $q_i$ satisfy 
    $T \cdot (1 - \prior_i(t_i^+) + q_i) = {\int_{t_i^+}^{\infty} x ~\mathrm{d} \prior_i(x)} + q_it_i$,
    then player $i$'s optimal information revealing strategy generates two signals: 
    \squishlist
    \item[(a)] with probability $1-p_i \triangleq 1 - \prior_i(t_i^+) + q_i$,
    one signal leads to the searcher's posterior mean  $T$;
    \item[(b)] with probability $p_i = \prior_i(t_i^+) - q_i$,
    one signal leads to the searcher's posterior mean $\frac{\expect{X_i} - T (1 - \prior_i(t_i^+) + q_i)}{\prior_i(t_i^+) - q_i}$.
    \squishend
    such that $p_i$ is minimized;
    \item if $T > s$, then player $i$ cannot use any information revealing strategy to get a positive payoff, so we define $p_i = 1$
\squishend
\end{proposition}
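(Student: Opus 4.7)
The plan is to extend the continuous-case argument of \Cref{prop:optimal signaling scheme} to distributions with atoms, with the parameter $q_i$ playing the role of a fractional inclusion of the atom at the cutoff. I would proceed in three steps.

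First, I reduce to binary signaling schemes. Since the searcher's decision depends only on whether the posterior mean induced by a signal is at least $T$, without loss of generality player $i$ uses exactly two signals: an \emph{accept} signal with posterior mean $\ge T$ (received with probability $1-p_i$) and a \emph{reject} signal with posterior mean below $T$ (received with probability $p_i$). Merging signals on either side preserves the acceptance probability. The two boundary cases then follow immediately: when $T \le \lambda_i$, the no-information scheme has posterior mean $\lambda_i \ge T$ and yields $p_i = 0$; when $T > s$, every feasible posterior mean is at most $s < T$, forcing $p_i = 1$.

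For the main case $\lambda_i < T \le s$, I would first argue that the optimal accept pool has posterior mean exactly $T$: if its mean were strictly larger and any low-valued mass remained in the reject pool, transferring a small amount of that mass into the accept pool would preserve posterior mean $\ge T$ while strictly increasing the acceptance probability, contradicting optimality. Writing the scheme as $\alpha:\supp(\prior_i)\to[0,1]$, with $\alpha(x)$ the probability of assigning realization $x$ to the accept signal, the player's problem becomes the infinite-dimensional LP
\[
\max\ \int \alpha(x)\, \mathrm{d}\prior_i(x) \quad \text{s.t.} \quad \int (x-T)\,\alpha(x)\,\mathrm{d}\prior_i(x) \ge 0, \quad 0 \le \alpha \le 1.
\]
Lagrangian duality with multiplier $\mu > 0$ on the mean constraint makes the pointwise optimizer $\alpha(x) = 1$ for $x > T - 1/\mu$, $\alpha(x) = 0$ for $x < T - 1/\mu$, and $\alpha$ free in $[0,1]$ at the boundary. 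Setting $t_i := T - 1/\mu$ and letting $q_i$ be the portion of the atom at $t_i$ absorbed into the accept pool recovers exactly the claimed threshold-with-fractional-atom structure, and the mean-matching equation in the statement becomes the primal feasibility condition.

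The main delicacy lies in the behavior at atoms. Let $\mu(t,q)$ denote the conditional mean of the accept pool consisting of all mass strictly above $t$ together with additional mass $q$ at $t$; this quantity is monotone as $(t,q)$ is swept from $(0,0)$ up to the top of the support, but jumps upward when $(t,q)$ crosses an atom of $\prior_i$. When $T$ lies inside such a jump at some atom $t_0$, no strict cutoff achieves mean exactly $T$, and a nontrivial $q_i$ is precisely what is needed to interpolate. I expect the main subtlety to be verifying that the acceptance probability $1 - p_i$ is uniquely determined even though the representation $(t_i, q_i)$ may not be unique at atom boundaries (e.g., $t_i$ just above an atom with $q_i = 0$ describes the same pool as $t_i$ at that atom with $q_i = 0$); the ``such that $p_i$ is minimized'' clause in the statement resolves this ambiguity by selecting the representation whose acceptance mass matches the LP optimum.
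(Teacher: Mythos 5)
Your proposal is correct and proves the claimed characterization, but it takes a genuinely different route from what the paper does. The paper's own treatment of the point-mass case is only a sketch: it refers back to the continuous-case proof of \Cref{prop:optimal signaling scheme}, which reduces every signaling scheme to a mean-preserving contraction (MPC) $G$ of the prior and then argues---with the key optimality step left ``easy to see''---that the binary threshold-pooling distribution maximizes $\prob[x\sim G]{x\ge T}$ over all MPCs; the fractional atom parameter $q_i$ is then explained informally as ``percentage threshold pooling'' in the spirit of \cite{CFHOV-17}, with the details explicitly ``left to the reader.'' You instead pose the player's best response directly as the one-constraint linear program
\begin{align*}
\max_{0\le\alpha\le 1}\ \int \alpha(x)\,\mathrm{d}\prior_i(x)
\quad\text{s.t.}\quad
\int (x-T)\,\alpha(x)\,\mathrm{d}\prior_i(x)\ge 0,
\end{align*}
where $\alpha(x)$ is the fraction of realization $x$ routed into the accept signal, and derive the cutoff-plus-fractional-atom structure from the Lagrangian dual (equivalently, from the greedy exchange argument you sketch: fill in mass from high $x$ downward until the mean constraint binds). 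Both views give the same answer; the LP route is more self-contained and makes the mechanical role of $q_i$ explicit, as the free primal variable at the knife-edge $x=t_i$ where the pointwise Lagrangian vanishes, whereas the paper's MPC viewpoint imports the Blackwell-order machinery wholesale. Your remark that the acceptance probability $1-p_i$ is determined by $T$ even though the pair $(t_i,q_i)$ need not be unique is exactly the observation the paper relies on for \Cref{lem:inducing T via p} and \Cref{lem:inducing T via P}, so it is well placed. The one step you gloss over that a referee would poke at is strong duality for this infinite-dimensional LP; one sentence suffices (the constraint space is one-dimensional so the dual is finite-dimensional and Slater's condition holds whenever $\lambda_i < T < s$), or you can avoid duality entirely by carrying the exchange argument you already mention through to completion.
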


The player's optimal information revealing strategy in this case is very similar to the case where the prior is continuous; in that case $q_i$ always equals $0$ and $t_i$ always exists and is unique. Here there can be many different $t_i$ when $q_i = 0$. However, $p_i$ is always unique, as each player wishes to minimize $p_i$ to maximize the probability $(1-p_i)$ of getting selected on his turn.

Also, we could have $q_i > 0$ for a unique $t_i$ when there is a point mass on $t_i$. This can be interpreted as \textit{percentage threshold pooling}, where a mass of $q_i$ on $t_i$ is pooled to $T$, and the remaining mass is pooled towards the other posterior mean. This idea is explored deeply in \cite{CFHOV-17} for IID distributions. Finally, if $T > s$, then there is obviously no way for the player to be accepted by the searcher. 

The proof for this proposition is the exact same approach in the continuous priors case, with details on percentage threshold pooling, and hence the exact details can be left to the reader.

Finally, we have the following remark synonymous to the continuous case:

\begin{remark}[Equivalent Representation of Optimal Information Revealing Strategy -- Point Mass]
We note that when $\expect{X_i} \le T \le s$,
the player $i$'s optimal information revealing strategy defined in \Cref{prop:optimal signaling scheme point mass} can be effectively represented as 
a distribution $G_i$ supported on only two realizations, with $p_i$ defined in \Cref{prop:optimal signaling scheme point mass}:
\begin{align*}
    \prob[x\sim G_i]{x = T} = 1 -p_i, \quad 
    \prob[x\sim G_i]{x = a_i} = p_i\text{ where } 
    a_i \triangleq \frac{\expect{X_i} - T (1 - p_i)}{p_i}~.
\end{align*}
\end{remark}

\begin{lemma} [Inducing $T,t_i$ via $p_i$]
\label{lem:inducing T via p}
For any $0 \le p_i \le 1$, there exists a unique pair $(T,t_i)$ such that there exists $q_i \ge 0$ such that $T(1-p_i) = \int_{t_i^+}^{\infty} x\mathrm{d}\prior_i(x) + q_it_i$.
\end{lemma}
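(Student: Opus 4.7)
The plan is to interpret $T$ as the conditional mean of $X_i$ over the ``top $(1-p_i)$-fraction'' of the distribution $\prior_i$, with the cutoff $t_i$ and the partial-mass $q_i$ jointly encoding how any point mass at the cutoff is split between the two signals. This viewpoint makes $T$ manifestly a function of $p_i$ and $\prior_i$ alone, which drives the uniqueness claim, while an explicit quantile-style construction of $t_i$ gives existence.

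For existence, given $p_i \in [0,1)$, I would set $t_i := \inf\{t : \prior_i(t) \geq p_i\}$ and $q_i := \prior_i(t_i^+) - p_i$. Right-continuity of CDFs gives $\prior_i(t_i^+) \geq p_i$, hence $q_i \geq 0$; the definition of the infimum gives $\prior_i(t_i^-) \leq p_i$, hence $q_i$ does not exceed the point mass at $t_i$. These two identities together enforce the mass-balance $1 - p_i = 1 - \prior_i(t_i^+) + q_i$ implicit in \Cref{prop:optimal signaling scheme point mass}, and substituting this balance into the lemma's equation then pins down $T$.

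For uniqueness, I would argue that any triple $(T, t_i, q_i)$ satisfying the equation together with the mass-balance describes the same underlying measure -- the restriction of $\prior_i$ to $(t_i, \infty)$ plus a point mass of size $q_i$ at $t_i$ -- which carries precisely the top $(1-p_i)$-mass of $\prior_i$ and is therefore determined intrinsically by $p_i$ and $\prior_i$. This forces the mean $T$ to be unique; strict monotonicity of $\mathbb{E}[X_i \mid X_i \geq t]$ in $t$ then pins down $t_i$ on the strictly-increasing portion of $\prior_i$, giving uniqueness of the pair $(T, t_i)$ in the generic case.

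The hard part will be the mild degeneracy in which $\prior_i$ is flat on an interval $[a,b]$ with $\prior_i(a) = p_i$: every $t_i \in [a,b]$ (with $q_i = 0$) satisfies the equation with the same $T$, so the pair $(T,t_i)$ is not literally unique and one must adopt a canonical convention (e.g., the leftmost $t_i$ compatible with some $q_i \geq 0$), as is standard in quantile-based arguments. The boundary case $p_i = 1$ is similarly degenerate since the equation reduces to $0 = 0$ for any $T$ above the support of $\prior_i$, and can be handled by fiat; with these conventions, the statement holds as claimed.
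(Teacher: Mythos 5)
The paper does not actually supply a proof of this lemma: after the statement it offers only a two-sentence intuition (find a quantile cutoff $t_i$, allow partial pooling of the point mass at $t_i$, and observe $T$ is increasing in $p_i$). Your proposal formalizes exactly that intuition with the standard quantile construction $t_i := \inf\{t : \prior_i(t)\ge p_i\}$, $q_i := \prior_i(t_i^+)-p_i$, and the right-/left-continuity arguments that make $q_i\ge 0$ and no larger than the atom at $t_i$; this is correct and fills the gap the paper leaves open.

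You also correctly surface two issues that the paper's statement and sketch gloss over. First, as literally written the lemma cannot be true: the equation $T(1-p_i)=\int_{t_i^+}^\infty x\,\mathrm{d}\prior_i(x)+q_it_i$ alone, with $q_i\ge 0$ free, admits a solution $T$ for every choice of $t_i$, so the mass-balance $q_i=\prior_i(t_i^+)-p_i$ from \Cref{prop:optimal signaling scheme point mass} must be read in as an implicit side constraint --- you make this explicit, which is the right reading. Second, even with the mass-balance imposed, the pair $(T,t_i)$ is not literally unique: when $\prior_i$ is flat on an interval at level $p_i$ any $t_i$ in that interval (with $q_i=0$) gives the same $T$, and when $p_i=1$ the equation degenerates to $0=0$. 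What is intrinsically unique is the top-$(1-p_i)$ restriction of $\prior_i$ and hence its mean $T$; the cutoff $t_i$ is unique only up to the quantile ambiguity, so a canonical choice (e.g., the infimum) is needed. Flagging this is valuable because downstream (\Cref{lem:inducing T via P}, \Cref{lem:ub opt point mass}) the paper really only uses $T$ and the products of the $p_i$, for which your observation shows the lemma's content survives the degeneracy.
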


The intuition behind this lemma's proof is that in the continuous case, we can always find a pooling cutoff $t_i$ such that $p_i = \int_0^{t_i} \mathrm{d}\prior_i(X_i)$, and in the non-continuous case, this pooling trick should also always be possible if we allow for partial pooling on point masses on $t_i$. Also, note that $T$ is clearly increasing with respect to $p_i$.

Like in the main text, we refer to $t_i$ as the pooling cutoff.

With this lemma, we have the following natural definition:

\begin{definition}[Inducing $T,t_i$ via $p_i$]
    We say that a rejection probability $0 \le p_i < 1$ induces threshold $T$ and pooling cutoff $t_i$ if $T$ and $t_i$ satisfy \Cref{lem:inducing T via p}. 
\end{definition}

\begin{lemma} [Inducing $T$ via $P$]
\label{lem:inducing T via P}
For any $0 \le P < 1$, there exists a $T$ such that
there exists a set of pairs $\{(t_i,q_i)\}$ and a set $\{p_i\}$ such that $T(1-p_i) = \int_{t_i^+}^{\infty} x\mathrm{d}\prior_i(x) + q_it_i$ for all $i$ and $\prod_{i=1}^{\boxNum} p_i = P$.
\end{lemma}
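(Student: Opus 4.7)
The plan is to apply the intermediate value theorem to the product of rejection probabilities, viewed as a function of a common threshold $T$. Using Lemma~\ref{lem:inducing T via p}, for each player $i$ and each $p_i\in[0,1)$ there is a unique threshold $T_i(p_i)$ together with a pooling pair $(t_i,q_i)$ satisfying the stated equation. First I would verify that $T_i\colon[0,1)\to[\lambda_i,\sup\supp(X_i)]$ is non-decreasing and continuous, with $T_i(0)=\lambda_i$ and $\lim_{p_i\uparrow 1} T_i(p_i)=\sup\supp(X_i)$, so that inverting yields a (possibly set-valued at a few critical $T$) rejection probability $p_i(T)\in[0,1]$, extended by the conventions $p_i(T)=0$ for $T\le\lambda_i$ and $p_i(T)=1$ for $T>\sup\supp(X_i)$.

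The subtle step is the continuity of $p_i(T)$. When $\prior_i$ is continuous, $q_i\equiv 0$ and $p_i=\prior_i(t_i)$, so continuity follows from the continuity of $\prior_i$ and the continuous variation of $t_i$ with $T$. At a point mass of $\prior_i$ at some level $a$, however, the cutoff $t_i$ must dwell at $a$ while $T$ sweeps the interval $[\mathbb{E}[X_i\mid X_i\ge a],\,\mathbb{E}[X_i\mid X_i>a]]$; it is precisely the partial-pool parameter $q_i\in[0,\Pr(X_i=a)]$ that absorbs this motion, letting $p_i=\prior_i(a^+)-q_i$ vary continuously over $[\prior_i(a^-),\prior_i(a^+)]$. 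The same mechanism bridges the would-be jump at the right endpoint of the support: writing $s_i=\sup\supp(X_i)$, at $T=s_i$ every $p_i\in[1-\Pr(X_i=s_i),\,1]$ is realized by some $q_i\in[0,\Pr(X_i=s_i)]$, so no gap opens up before $p_i$ is forced to $1$ for $T>s_i$.

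With continuity in hand, define $\pi(T)\triangleq\prod_{i=1}^{\boxNum} p_i(T)$, choosing any selection from the set-valued $p_i$ at the critical thresholds. Then $\pi$ is non-decreasing, with $\pi(T)=0$ for $T\le\max_i\lambda_i$ (at least one factor vanishes) and $\pi(T)=1$ for $T\ge\max_i s_i$ (every factor equals $1$). The continuity of each $p_i$ together with the interval-filling at each $T=s_i$ implies that the range of $\pi$ over $T\in[\max_i\lambda_i,\max_i s_i]$ is the entire interval $[0,1]$, so for every $P\in[0,1)$ there exists $T$ and pooling data $(t_i,q_i,p_i)$ with $\pi(T)=P$. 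The main obstacle I anticipate is exactly the bookkeeping at the critical thresholds $T=s_i$: one must verify that as $T$ crosses each $s_i$, varying $q_i$ across the point mass indeed fills the entire interval between the one-sided limits of $\pi$, so that no value $P\in[0,1)$ slips through a gap in the range.
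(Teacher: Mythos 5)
Your proof is correct and takes essentially the same approach as the paper's, which applies the intermediate value theorem to the monotone lower/upper envelopes $\underline{P}(T)$ and $\overline{P}(T)$ of the achievable products $\prod_i p_i$ at each $T$, observing $\underline{P}(\max_i \lambda_i)=0$ and $\overline{P}(T')\to 1$ as $T'$ grows. Your writeup just spells out more fully the continuity and interval-filling mechanism (the partial-pool parameter $q_i$ absorbing the sweep across point masses, including the endpoint $s_i$) that the paper's brief argument leaves implicit.
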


\begin{proof}
    For any $T$, let $\underline{P}(T)$ and $\overline{P}(T)$ be the minimum and maximum values of $\prod_{i=1}^{\boxNum} p_i$ among all sets $\{p_i\}$ such that $p_i$ induces $T$. Note that $\underline{P}(\max_i \priorMean_i) = 0$ since for $\priorMean_j = \max_i \priorMean_i$, $p_j = 0$, and also for any $\epsilon > 0$, there exists a large enough $T'$ such that $\overline{P}(T') = 1-\epsilon$. Note that $\underline{P}(T) \le \overline{P}(T)$ and both functions are increasing with respect to $T$ since $T$ is increasing with respect to $p_i$. Hence, by the Intermediate Value Theorem, for any $P$, we should find a $T$ such that $\underline{P}(T) \le P \le \overline{P}(T)$.
\end{proof}

So with this lemma, we have a definition:

\begin{definition}[Inducing $T$ via $P$]
    We say that a probability $0 \le P \le 1$ induces a threshold $T$ if $T$ satisfies \Cref{lem:inducing T via P}. 
\end{definition}

Note there can be multiple such $T$ induced by any given $P$.

\section{Missing Proofs in \texorpdfstring{\Cref{sec:general prior}}{}}
\label{apx:proof general prior}

We use the following example to prove \Cref{lem:tightness general priors}.
\begin{example}[Example for Proving \Cref{lem:tightness general priors}]
   Let there be $\boxNum$ distributions such that $\prior_1 = \Delta(d)$ where $d = \boxNum-1-\epsilon$ for $\epsilon > 0$, and for $i \neq 1$, the distributions $\prior_i$ have mean $1$ and support on $\{0,s\}$, where $s$ is significantly large.
\end{example}

\begin{proof}[Proof of \Cref{lem:tightness general priors}]
    Firstly, since the binary distributions $\prior_i$ for $i \neq 1$ have mass $\frac{1}{s}$ on $s$ and $1-\frac{1}{s}$ on $0$, we can compute
    \begin{align*}
    \OPT = d \cdot \left(1-\frac{1}{s}\right)^{\boxNum-1} + s \cdot \left(1-\left(1-\frac{1}{s}\right)\right)^{\boxNum-1}
    \end{align*}
    Note that $\lim_{s \to \infty} s \cdot (1 - (1-\frac{1}{s}))^{\boxNum-1} = \boxNum-1$, so we can make $s$ large enough (say $s > S(d)$) for some function $S$ in terms of $d$ such that $\OPT/2 > d$. 
    
    Furthermore, one can compute $\medianThresholdUB = d$, since $Pr[\max\{X_i\} \ge d] = (1-\frac{1}{s})^{\boxNum-1} > 1/2$ for large enough $s$ and $Pr[\max\{X_i\} \ge T] = 0$ for any $T < d$. Also, it is well-known from \cite{KW-12} that $\TEqual \ge \OPT/2 > d$, so one can compute $\TEqual = (\boxNum-1) \cdot \frac{1}{s} \cdot (s - \TEqual)$ and thus $\TEqual = \frac{\boxNum-1}{1+(\boxNum-1)/s} < \boxNum-1$.
    
    We know for $T > d$, $\searcherUS(T) = T \cdot (1 - (1-\frac{1}{T})^{\boxNum-1})$, since players $i > 1$ use optimal information revealing strategies $G_i$ with binary support $\{0,T\}$ and mass $1/T$ on $T$ in the strategic setting. This payoff is increasing and thus maximized within the spectrum at $T = \TEqual$, since $\TEqual \ge \max\{d,\OPT/2\}$. Thus for any $T$ within the spectrum,
    \begin{align*}
        \searcherUS(T) \le \searcherUS(\TEqual) < \searcherUS(\boxNum-1) \le (\boxNum-1) \cdot \left(1-\left(1 - \frac{1}{\boxNum-1}\right)^{\boxNum-1}\right)
    \end{align*}
    
    Also, one can compute $\lim_{s \to \infty, d \to \boxNum-1, s \ge S(d)} \OPT = 2(\boxNum-1)$. Therefore, we can conclude 
    \begin{align*}
        \frac{\searcherUS(T)}{\OPT} < \frac{1-\left(1 - \frac{1}{\boxNum-1}\right)^{\boxNum-1}}{2} + \epsilon'
    \end{align*}
    for any $\epsilon' > 0$ where $s$ and $\epsilon$ are made large and small enough, respectively.
\end{proof}

\begin{remark}[Continuation of \Cref{rmk:1/2 approx general}]
    For this example, one can show that $T = s$ is a $1/2$-robust approximation threshold stopping policy. This is because under the strategic setting, the players' optimal information revealing strategies $G_i$ for $i \neq 1$ are binary support distributions with $\supp(G_i) = \{0,s\}$, which are equivalent to the original distributions $\prior_i$. Then the searcher's payoff in both the strategic and non-strategic settings is $\searcherUS(s) = s \cdot (1 - (1-\frac{1}{s}))^{\boxNum-1} > \OPT/2$.
\end{remark}

\begin{lemma}[Upper Bounding the Prophet Value via the Pooling Thresholds -- Point Mass]
\label{lem:ub opt point mass}
Given a probability $0 \le P \le 1$, let 
the common threshold $T$ be induced by $P$, and values $p_i$, $t_i$ and $q_i$ for each player $i$ 
be defined as in \Cref{lem:inducing T via P}. 
Let $I \triangleq \argmax_{i\in[\boxNum]} t_i$,
then we have 
\begin{align*}
    \OPT \le p_I t_I + \sum_{i} T(1-p_i)~.
\end{align*}
\end{lemma}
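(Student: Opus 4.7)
The plan is to mirror the proof of \Cref{lem:ub opt} for the continuous case, with careful bookkeeping to accommodate point masses at the pooling cutoff $t_i$. First I would define $b_i \triangleq (X_i - t_i)^+$ for each $i$, observe the pointwise bound $X_i \le t_i + b_i$ almost surely, take the maximum over $i$ to obtain $\max_i X_i \le t_I + \sum_i b_i$, and then take expectations to arrive at $\OPT \le t_I + \sum_i \expect[\prior_i]{b_i}$. This reduces the lemma to computing $\expect[\prior_i]{b_i}$ in a way that properly accounts for the fraction $q_i$ of a possible atom at $t_i$ that is pooled into the ``high'' signal (as specified in \Cref{prop:optimal signaling scheme point mass} and \Cref{lem:inducing T via P}).

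The heart of the argument is the identity $\expect[\prior_i]{b_i} = (T - t_i)(1 - p_i)$. To establish it, I would first write $\expect[\prior_i]{b_i} = \int_{t_i^+}^{\infty}(x - t_i)\,\mathrm{d}\prior_i(x) = \int_{t_i^+}^{\infty} x\,\mathrm{d}\prior_i(x) - t_i\bigl(1 - \prior_i(t_i^+)\bigr)$, noting that any atom of $\prior_i$ at the single point $t_i$ contributes zero to $b_i$ itself. Next, I would substitute the defining identity $T(1 - p_i) = \int_{t_i^+}^{\infty} x\,\mathrm{d}\prior_i(x) + q_i t_i$ together with the relation $p_i = \prior_i(t_i^+) - q_i$, which yields $\expect[\prior_i]{b_i} = T(1 - p_i) - q_i t_i - t_i\bigl(1 - \prior_i(t_i^+)\bigr) = T(1 - p_i) - t_i(1 - p_i) = (T - t_i)(1 - p_i)$. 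The fractional pooling parameter $q_i$ is thus absorbed exactly by the correction term involving $\prior_i(t_i^+)$, giving the same closed-form as in the continuous setting.

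With this identity, the rest is straightforward algebra: $\OPT \le t_I + \sum_i (T - t_i)(1 - p_i) = t_I - t_I(1 - p_I) - \sum_{i \neq I} t_i(1 - p_i) + \sum_i T(1 - p_i) \le p_I t_I + \sum_i T(1 - p_i)$, where the final inequality drops the non-positive term $-\sum_{i \neq I} t_i(1 - p_i)$ using $t_i \ge 0$ and $1 - p_i \ge 0$. The main obstacle I anticipate is purely the bookkeeping around the atom at $t_i$, in particular ensuring that the partial pooling probability $q_i$ is treated consistently in both the integral representation of $\expect[\prior_i]{b_i}$ and in the defining identity for $T$; once this consistency is verified, the proof collapses to the same algebraic template as the continuous case, and the continuous-case bound is recovered as the special case $q_i = 0$, $\prior_i(t_i^+) = \prior_i(t_i)$, and $p_i = \prior_i(t_i)$.
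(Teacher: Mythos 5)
Your proposal is correct and follows essentially the same route as the paper's proof: the same decomposition via $b_i = (X_i - t_i)^+$, the same reduction to $\OPT \le t_I + \sum_i \expect[\prior_i]{b_i}$, the same use of the defining identity for $T$ (with the $q_i$ atom correction) to arrive at $(T-t_i)(1-p_i)$, and the same final rearrangement dropping $-\sum_{i\ne I} t_i(1-p_i)$. If anything, you are slightly more careful than the paper in spelling out that the step yielding $\expect[\prior_i]{b_i} = (T-t_i)(1-p_i)$ is an equality (the paper displays it as an inequality, which is harmless but imprecise).
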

\begin{proof}[Proof of \Cref{lem:ub opt point mass}]
Define $b_i \triangleq (X_i - t_i)^+$.
By definition, for each $i$, we have $X_i \le t_i + b_i$.
Thus, 
\begin{align*}
    \OPT = \expect[\prior_{1:\boxNum}]{\max_i X_i}
    & \le \expect[\prior_{1:\boxNum}]{\max_i (t_i + b_i)}\\
    & \le \max_i t_i + \sum_{i}\expect[\prior_i]{b_i}  \\
    & \le \max_i t_i + \sum_{i}\int_{t_i^-}^{\infty} (X_i-t_i)\mathrm{d}\prior_i(X_i)  \\
    & = \max_i t_i + \sum_{i}\left[\int_{t_i^+}^{\infty} (X_i-t_i)\mathrm{d}\prior_i(X_i) + q_i(t_i-t_i)\right]  \\
    & \overset{(a)}{\le} \max_i t_i + \sum_{i}(T - t_i) \cdot (1-\prior_i(t_i^+) + q_i)\\
    & \overset{(b)}{\le} t_I p_I + \sum_{i} T(1- p_i)~,
\end{align*}
where inequality (a) follows from 
the definition of pooling threshold $t_i$ in \Cref{lem:inducing T via p}, and inequality (b) follows from the definitions of $I$, $p_i$, $q_i$, $t_i \ge 0$, and rearranging the terms.
\end{proof}

\begin{lemma}[Lower Bounding the Searcher's Payoff via the Pooling Thresholds -- Point Mass]
\label{lem:lb payoff point mass}
Given a probability $0 \le P < 1$, let $T$ be a threshold induced by $P$ and the rejection probabilities $p_i$ be defined by the player's optimal information revealing strategy from \Cref{prop:optimal signaling scheme point mass}. Then the player's payoff $\searcherUS(T) = T(1-\prod_{i=1}^{\boxNum} p_i) \ge T(1-P)$.
\end{lemma}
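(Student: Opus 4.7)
The plan has two clean parts: first, establish the identity $\searcherUS(T) = T\bigl(1-\prod_i p_i\bigr)$, and second, compare the product of the \emph{optimal} rejection probabilities with $P$ using minimality of each $p_i$ in the characterization of the optimal signaling scheme.

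For the identity, I would appeal to the equivalent representation of each player's optimal information revealing strategy established immediately after \Cref{prop:optimal signaling scheme point mass}: with threshold $T$, player $i$'s strategy induces a binary-support distribution $G_i$ on $\{a_i, T\}$ with $\Pr[x=T] = 1-p_i$ and $\Pr[x=a_i] = p_i$, where $a_i < T$. Under the threshold stopping policy with threshold $T$, the searcher accepts player $i$ exactly when the realized posterior mean equals $T$, which happens with probability $1-p_i$ independently across players. Thus the searcher reaches player $i$ with probability $\prod_{j<i} p_j$, and conditional on stopping at $i$ obtains payoff $T$. Summing and telescoping,
\begin{align*}
    \searcherUS(T) \;=\; \sum_{i=1}^{\boxNum} T(1-p_i)\prod_{j<i} p_j \;=\; T\Bigl(1-\prod_{i=1}^{\boxNum} p_i\Bigr).
\end{align*}

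For the inequality $T\bigl(1-\prod_i p_i\bigr) \ge T(1-P)$, it suffices to show $\prod_i p_i \le P$. By the hypothesis, $T$ is induced by $P$, which by definition means there exist auxiliary quantities $\{(\tilde t_i, \tilde q_i)\}$ and feasible rejection probabilities $\{\tilde p_i\}$, each satisfying the balance constraint $T(1-\tilde p_i) = \int_{\tilde t_i^+}^{\infty} x\,\mathrm{d}\prior_i(x) + \tilde q_i \tilde t_i$, with $\prod_i \tilde p_i = P$. Each $\tilde p_i$ is thus attainable by \emph{some} valid signaling scheme of player $i$ consistent with threshold $T$. On the other hand, the $p_i$ appearing in the statement are, by \Cref{prop:optimal signaling scheme point mass}, the \emph{minimum} rejection probabilities over all valid schemes for player $i$ at threshold $T$. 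Hence $p_i \le \tilde p_i$ for every $i$, and multiplying gives $\prod_i p_i \le \prod_i \tilde p_i = P$, as required.

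The main obstacle, as I see it, is purely conceptual rather than computational: one must be careful that the ``$P$ induces $T$'' relation is not a unique relationship (different sets $\{\tilde p_i\}$ with different products can all induce the same $T$), so the argument goes through any single feasible witness with product equal to $P$, and then invokes the pointwise minimality guaranteed by the optimal-strategy characterization. Once this is articulated, the rest is a short computation and a telescoping sum, so no further substantive work is needed.
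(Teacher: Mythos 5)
Your proof is correct and follows essentially the same route as the paper's. Both establish the identity $\searcherUS(T) = T\bigl(1-\prod_i p_i\bigr)$ by observing the searcher's payoff is $T$ times the probability some player signals $T$, and both derive $\prod_i p_i \le P$ from the fact that each player's optimal rejection probability is the pointwise minimum over all feasible $\tilde p_i$ consistent with threshold $T$ (the paper phrases this as $\prod_i p_i = \underline{P}(T) \le P$, which is the same content); your write-out of the telescoping sum and the explicit witness $\{\tilde p_i\}$ is merely a more detailed rendering of the same argument.
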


\begin{proof}
    Using notation from the proof of \Cref{lem:inducing T via P}, our $p_i$ induces $T$, and must also satisfy $\prod_{i=1}^{\boxNum} p_i = \underline{P}(T)$. This is because when the players see threshold $T$, each player wishes to minimize their rejection probability $p_i$, and $\underline{P}(T)$ is the minimum such product of probabilities. Hence $\underline{P}(T) \le P$. The searcher's payoff is also $T$ times the conditional probability at least one player reveals signal $T$, which is $T(1-\prod_{i=1}^{\boxNum} p_i)$.
\end{proof}

\begin{lemma}
\label{lem: middle threshold point mass}
Let 
$T = T^\dagger$ be induced by $P = (\frac{\boxNum-1}{\boxNum})^{\boxNum}$, then we have the searcher's expected payoff
$\searcherUS(T^\dagger) \ge  T^\dagger\left(1 - \prod_{i}\prior_i(t_i^\dagger)\right)
    \ge\frac{1-(\frac{\boxNum-1}{\boxNum})^{\boxNum}}{2} \cdot \OPT$.
\end{lemma}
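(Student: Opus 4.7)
The plan is to mirror the argument of \Cref{lem:middle threshold}, swapping the continuous ingredients for the point-mass analogs already set up above. The only extra preparatory step is to invoke \Cref{lem:inducing T via P}, which guarantees that a threshold $T^\dagger$ induced by $P = (\tfrac{\boxNum-1}{\boxNum})^{\boxNum} \in [0,1)$ actually exists. Write $p_i^\dagger, t_i^\dagger$ (and $q_i^\dagger$) for the rejection probabilities and pooling cutoffs associated with $T^\dagger$, so that by the definition of ``induces'' we have $\prod_i p_i^\dagger = (\tfrac{\boxNum-1}{\boxNum})^{\boxNum}$.

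First I would apply \Cref{lem:lb payoff point mass} to $T^\dagger$ to obtain the lower bound
\[
\searcherUS(T^\dagger) \;\ge\; T^\dagger \Bigl(1 - \prod_{i} p_i^\dagger\Bigr) \;=\; T^\dagger \cdot \Bigl(1 - \bigl(\tfrac{\boxNum-1}{\boxNum}\bigr)^{\boxNum}\Bigr).
\]
Next, \Cref{lem:ub opt point mass} gives $\OPT \le p_I^\dagger\, t_I^\dagger + \sum_i T^\dagger(1 - p_i^\dagger)$, where $I = \argmax_i t_i^\dagger$. Using $t_I^\dagger \le T^\dagger$ (a pooling cutoff never exceeds the threshold it induces) together with $p_I^\dagger \le 1$, and then applying AM-GM to $\sum_i p_i^\dagger \ge \boxNum(\prod_i p_i^\dagger)^{1/\boxNum}$, I would get
\[
\frac{\searcherUS(T^\dagger)}{\OPT} \;\ge\; \frac{1 - \prod_i p_i^\dagger}{\boxNum + 1 - \sum_i p_i^\dagger} \;\ge\; \frac{1 - \prod_i p_i^\dagger}{\boxNum + 1 - \boxNum \bigl(\prod_i p_i^\dagger\bigr)^{1/\boxNum}}.
\]
Plugging $\prod_i p_i^\dagger = (\tfrac{\boxNum-1}{\boxNum})^{\boxNum}$ into the right-hand side collapses the denominator to exactly $2$, which delivers the claimed ratio $\tfrac{1 - ((\boxNum-1)/\boxNum)^{\boxNum}}{2}$.

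\textbf{Main obstacle.} The algebraic core -- the inequality chain and the fact that the function $f(y) = (1 - y^{\boxNum})/(\boxNum + 1 - \boxNum y)$ evaluates to $\tfrac{1 - ((\boxNum-1)/\boxNum)^{\boxNum}}{2}$ at $y = (\boxNum-1)/\boxNum$ -- is already executed in the continuous proof and transfers verbatim. The only genuinely new issue is handling atoms: a pooling cutoff $t_i^\dagger$ is no longer unique, and one may have $q_i^\dagger > 0$ corresponding to partially pooling a point mass at $t_i^\dagger$. I would sidestep this entirely by working with the rejection probabilities $p_i^\dagger$ (which \emph{are} uniquely determined by $T^\dagger$ through the optimal response characterized in \Cref{prop:optimal signaling scheme point mass}) rather than with $H_i(t_i^\dagger)$; this is precisely why \Cref{lem:ub opt point mass} and \Cref{lem:lb payoff point mass} are stated in terms of $p_i$ and are the natural replacements for their continuous counterparts.
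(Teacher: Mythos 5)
Your proof is correct and takes essentially the same route as the paper's, which likewise invokes \Cref{lem:inducing T via P} to get $T^\dagger$ induced by $P=(\tfrac{\boxNum-1}{\boxNum})^{\boxNum}$, then chains \Cref{lem:lb payoff point mass} and \Cref{lem:ub opt point mass} and re-runs the AM-GM computation of \Cref{lem:middle threshold} with the inducing rejection probabilities $p_i'$ in place of $\prior_i(t_i)$. One small caution: in your obstacle paragraph you say the $p_i^\dagger$ are ``uniquely determined by $T^\dagger$ through the optimal response,'' but the chain of inequalities actually needs the \emph{inducing} family with $\prod_i p_i^\dagger = P$, which need not coincide with the optimal-response family (whose product is $\underline{P}(T^\dagger) \le P$); since your displayed inequalities use only the inducing family, the argument still stands, but that sentence should be amended.
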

\begin{proof}[Proof of \Cref{lem: middle threshold point mass}]
To see the above result,  notice that one can induce a threshold $T^\dagger$ via $P = (\frac{\boxNum-1}{\boxNum})^{\boxNum}$, and replacing $p_i$ with $p_i'$ in \Cref{lem:ub opt point mass} and \Cref{lem:lb payoff point mass} where $\prod_{i=1}^{\boxNum} p_i' = P$, we can see that $\frac{\searcherUS(T^\dagger)}{\OPT} \ge \frac{T^\dagger(1-\prod_{i=1}^{\boxNum} p_i')}{p_I' t_I + \sum_{i} T^\dagger(1-p_i')}$, and continue the algebraic proof of \Cref{lem:middle threshold} in the same way but replace $\prior_i(t_i)$ by $p_i'$.
\end{proof}

With the above \Cref{lem: middle threshold point mass}, we are now ready to prove \Cref{thm:general prior}:
\begin{myproof}[Proof of \Cref{thm:general prior}]
We notice that from the proof of \Cref{lem: middle threshold point mass}, we know
\begin{align*}
    T^\dagger\left(1 - \prod_{i}\prior_i(t_i^\dagger)\right)
    \ge \OPT\cdot \frac{1 - (\frac{\boxNum-1}{\boxNum})^{\boxNum}}{2}~,
\end{align*}
where $t_i^\dagger$ is the pooling cutoff 
defined in \Cref{prop:optimal signaling scheme} with the threshold $T^\dagger$.
By definition of $ T^\dagger$, we have $\prod_{i}\prior_i(t_i^\dagger) \le \prod_{i} p_i' = (\frac{\boxNum-1}{\boxNum})^{\boxNum} \le \sfrac{1}{e}$ using the same definition of $p_i'$ in the proof. Thus we can deduce that $T^\dagger \ge \sfrac{\OPT}{2} = \halfMax$. 
Now we can finish in the same manner as in the main text.
\end{myproof}

\section{Missing Proofs in \texorpdfstring{\Cref{sec:iid prior}}{}}
\label{apx:proof iid prior}

In below, we present the proof of \Cref{thm:IID prior} for the continuous distribution case.
The proof for the distributions that have point masses follows similarly with a subtlety mentioned below:
\begin{remark} [\Cref{thm:IID prior} -- Point Mass]
    \label{thm:IID prior point mass}
    The only subtlety in the main text proof with continuity of distribution $\prior$ is defining $p$ and $t^*$. For noncontinuous cases, we can let $p$ and $t^*$ in the main text proof be defined as $p_i$ and $t_i$ in \Cref{prop:optimal signaling scheme point mass} and \Cref{lem:inducing T via p} respectively, using threshold $T = \TEqual$. Since $p \le q$ still, the rest of the proof still follows.
\end{remark}

\begin{myproof}[Proof of \Cref{thm:IID prior}]
Let $\prior \equiv \prior_i$ for all $i\in[\boxNum]$.
Let the threshold $\TEqual$ satisfy $\TEqual = \sum_i\expect[\prior_i]{(X_i-\TEqual)^+}$.
Let
$t^*\equiv t_i^*$ for all $i\in[\boxNum]$ 
be the corresponding pooling cutoff defined in \Cref{prop:optimal signaling scheme} with the threshold $\TEqual$.
For notation simplicity, let $p \triangleq \prior(t^*)$ when $\prior$ is continuous.
Notice that by definition, we have
$\TEqual (1-p) = t^* (1-p) + \expect[\prior]{b^*}$ where $b^* \triangleq (X - t^*)^+$ for $X\sim \prior$, and let 
$c^* \triangleq  (X - T^*)^+$ for $X\sim \prior$.
By definition, we know that $\TEqual = \sum_{i} \expect[\prior_i]{(X_i-\TEqual)^+} = \boxNum \expect[\prior]{c^*}$.

From the searcher's expected payoff we know that
\begin{align*}
    \searcherUS(\TEqual) 
    = \TEqual \cdot (1-p^\boxNum) 
    & = \TEqual \cdot (1-p) \cdot (1+\cdots+p^{\boxNum-1}) \\
    & = (t^*(1-p) + \expect[\prior]{b^*}) \cdot \frac{1-p^\boxNum}{1-p} \\
    & \ge \expect[\prior]{b^*} \cdot \frac{1-p^\boxNum}{1-p} 
    \ge \expect[\prior]{c^*} \cdot \frac{1-p^\boxNum}{1-p}
    = \frac{\TEqual}{\boxNum} \cdot \frac{1-p^\boxNum}{1-p}~.
\end{align*}
Notice that the above inequalities also imply 
$p\le 1 - \sfrac{1}{\boxNum}$.

We now upper bound the prophet value $\OPT$.
Let $q \triangleq \prob[\prior]{X < \TEqual}$ 
and let $c_i \triangleq (X_i - \TEqual)^+$.
Clearly, we have $q \ge p$. 
Then 
\begin{align*}
    \OPT 
    & \le \expect[\prior_{1:\boxNum}]{\max_i (\TEqual + (X_i - \TEqual)^+)} \\
    & \le \TEqual + \expect[\prior_{1:\boxNum}]{\max_i (X_i - \TEqual)^+}\\
    & \le \TEqual + \sum_{j=1}^\boxNum 
    \prob[\prior_{1:\boxNum}]{|\{i: X_i \ge \TEqual\}|=j} 
    \cdot j\expect[\prior]{c^*} \\
    & \le \TEqual + \sum_{j=1}^\boxNum \binom{\boxNum}{j} (1-q)^j q^{\boxNum-j} \expect[\prior]{c^*} \\
    & = \TEqual + \boxNum(1-q)\expect[\prior]{c^*} 
    = \TEqual (2-q) 
    \le \TEqual (2-p)~.
\end{align*}
Thus, we have
\begin{align*}
    \frac{\searcherUS(\TEqual) }{\OPT }
    \ge \frac{\TEqual (1-p^\boxNum)}{\TEqual \cdot (2-p)}
    = \frac{1 - p^\boxNum}{2-p} 
    \overset{(a)}{\ge} \frac{1}{2}~,
\end{align*}
where inequality (a) holds as 
function $f(x) \triangleq \frac{1-x^\boxNum}{2-x} \ge \frac{1}{2}$ for all $x\in[0, 1-\sfrac{1}{\boxNum}]$ for any $\boxNum\ge 2$.
%
\end{myproof}

\begin{proof}[Proof of \Cref{prop:IID tightness}]
\label{prop:IID tightness proof}

Consider an instance with $N$ iid distributions on binary support $\{N-\alpha_1, N + \alpha_2\}$ for $ \alpha_1 \in (0, 1), \alpha_2 > 0$   such that the probabilities are chosen to make the   mean of each distribution   precisely $N$ --- i.e., the probability of taking the smaller value is $\frac{\alpha_2}{\alpha_1 + \alpha_2}$. For the purpose of this proof, one should think of $N, \alpha_2$ as being very large. The prophet value $\OPT$ can be calculated as following:  
\begin{align*}
    \OPT = (N +  \alpha_2) \cdot  \left[1 - \left(\frac{\alpha_2}{\alpha_2+\alpha_1}\right)^N\right]  + (N - \alpha_1) \cdot \left(\frac{\alpha_2}{\alpha_2 + \alpha_1}\right)^N.
\end{align*} 

We first claim that, for \emph{any} $\alpha_1 \in(0, 1)$, no threshold $T \in (N- \alpha_1,  N + \alpha_2]$ can guarantee at least $1/2$ competitive ratio  for non-strategic settings. Intuitively, this is because any such threshold will give up any realization of $N- \alpha_1$ value, which leads to bad performance when the probability of having at least one realization of  $ N + \alpha_2$ is small (i.e., when $\alpha_2$ is very large). Detailed calculation is as below:   any threshold $T \in (N- \alpha_1 , N + \alpha_2] $ will only accepts realized $N + \alpha_2$ value, hence the expected payoff for non-strategic (ns) setting is (i.e., the first term of $\OPT$) 
\begin{align*}
    \searcherUNS(T) = (N + \alpha_2)  \cdot \left[ 1 - \left(\frac{\alpha_2 }{\alpha_2 + \alpha_1 }\right)^N\right].
\end{align*}

Therefore,
\begin{eqnarray*}
    \frac{\searcherUNS(T)}{\OPT}  &=&  \frac{ (N + \alpha_2)  \cdot [ 1 - (\frac{\alpha_2 }{\alpha_2 + \alpha_1 })^N]  }{ (N + \alpha_2)  \cdot [ 1 - (\frac{\alpha_2 }{\alpha_2 + \alpha_1 })^N]  + (N - \alpha_1) \cdot (\frac{ \alpha_2}{\alpha_2 + \alpha_1})^N  }  \\ 
    &=&  \frac{  1  }{ 1+ \frac{N - \alpha_1}{N + \alpha_2} \cdot \frac{\left(\frac{ \alpha_2}{\alpha_2 + \alpha_1}\right)^N}{   1 - \left(\frac{\alpha_2 }{\alpha_2 + \alpha_1 }\right)^N}} ~.
    \end{eqnarray*} 
We now analyze the limit of the key term $    \frac{N - \alpha_1}{N + \alpha_2} \cdot \frac{(  \frac{ \alpha_2}{\alpha_2 + \alpha_1})^N}{   1 - (\frac{\alpha_2 }{\alpha_2 + \alpha_1 })^N   }   $. To ease the analysis, we re-parameterize the above term with $\alpha_2 = \frac{\alpha_1 N }{ \alpha}$ with free parameter $\alpha$ to obtain 
\begin{eqnarray*}
  \frac{N - \alpha_1}{N + \alpha_2} \cdot \frac{( \frac{ \alpha_2}{\alpha_2 + \alpha_1})^N}{   1 - (\frac{\alpha_2 }{\alpha_2 + \alpha_1 })^N   }     =     \frac{N - \alpha_1}{N} \frac{1}{1 + \alpha_1/\alpha}  \cdot \frac{(  \frac{ N }{N + \alpha})^N}{   1 - (\frac{ N }{N + \alpha })^N   }  = A(N, \alpha; \alpha_1)~.
\end{eqnarray*} 
Since $\alpha_1 \in (0, 1)$ is fixed. Now fix $\alpha$ and let $N \to \infty$, we have 
\begin{align*}
    \lim_{N \to \infty} A(N, \alpha; \alpha_1) =    \frac{1}{1 + \alpha_1/\alpha} \cdot \frac{e^{-\alpha} }{   1 - e^{-\alpha}  }~.
\end{align*}
Now, letting $\alpha \to 0$ and applying standard l'hopital's rule, we have 
\begin{align*}
    \lim_{ \alpha \to 0 }   \frac{1}{1 + \alpha_1/\alpha} \cdot \frac{e^{-\alpha} }{   1 - e^{-\alpha}  }  =  \lim_{ \alpha \to 0 }   \frac{\alpha e^{-\alpha} }{(\alpha + \alpha_1)(1 - e^{-\alpha} ) }  = \frac{1}{\alpha_1}~.
\end{align*} 
As a consequence, we have
\begin{align*}
    \lim_{N \to \infty, \alpha_2 \to \infty}   \frac{\searcherUNS(T)}{\OPT} = \frac{\alpha_1}{\alpha_1  + 1}~.
\end{align*} Thus, for any $\alpha_1 \in (0, 1)$, there exists an IID instance with large enough $N$ and $ \alpha_2 $ such that no threshold $T \in (N-\alpha_1, N +  \alpha_2]$ can achieve CR at least $1/2$.  

The above analysis shows that the only threshold $T$ that can guarantee a CR at least $1/2$ in the above class of instances must satisfy $T \le N-\alpha_1$. 
Any such threshold will achieve searcher's utility $\searcherUS(T) = N$ in the strategic setting since all players will reveal no reward information, leading to expected reward value $N$ for each player. 
A similar limit analysis using the same  reparameterization $\alpha_2 = \frac{\alpha_1 N }{ \alpha}$ shows that  $\lim_{\alpha \to 0 }\lim_{N  \to \infty}    \frac{\searcherUS(T)}{\OPT} = \frac{1}{1+\alpha_1}$.
The arguments above shows that for any $\alpha_1 < 1$, as $N, \alpha_2 \to \infty$ in the above instances,  any threshold that has CR at least $1/2$ in non-strategic setting will have CR tending to $ \frac{1}{1+\alpha_1}$ in the strategic setting. Since this conclusion holds for any $\alpha_1 \in (0, 1)$, it rules out the possibility of  having an $(1/2+\epsilon)$-robust threshold (even outside the $1/2$-approximation threshold spectrum as in Definition 2.2) for any  $\epsilon > 0$. 
\end{proof}

We now prove \Cref{lem:IID deviation guarantee} using standard techniques within our paper.

\begin{proof}[Proof of \Cref{lem:IID deviation guarantee} for continuous priors]
\label{proof: IID deviation guarantee}
    For threshold $T$ under strategic information revealing, let $p_i\triangleq \prior_i(t_i)$ where $t_i$ is the pooling cutoff defined in
    \Cref{prop:optimal signaling scheme}.
    Then we know $T(1-p_i) = t_i(1-p_i) + \mathbb{E}[b_i]$, where $b_i = (X_i - t_i)^+$. By \Cref{lem:ub opt}, 
    \begin{align*}
        \OPT \le t_i + \boxNum\sum_i \mathbb{E}[b_i]
    \end{align*}
    
    Also, one can compute
    \begin{align*}
        \searcherUS(T) = T \cdot \left(1-p_i^\boxNum\right) = T\cdot \left(1-p_i\right)\cdot \left(1 + \cdots + p_i^{\boxNum-1}\right)
    \end{align*}
    
    Now pick $T$ such that $p_i = 1-\frac{1}{\boxNum}$. Then 
    \begin{align*}
        \searcherUS(T) &= \left(t_i(1-p_i) + \mathbb{E}[b_i]\right) \cdot \left(1 + \cdots + p_i^{\boxNum-1}\right) \\
        &= \left(\frac{t_i}{\boxNum} + \mathbb{E}[b_i]\right) \cdot \frac{1-p_i^{\boxNum}}{1-p_i} \\ &\ge \frac{\OPT}{\boxNum} \cdot\left(1-\left(1-\frac{1}{\boxNum}\right)\right)^{\boxNum}\cdot \boxNum \\ &\ge \left(1-\frac{1}{e}\right) \cdot \OPT
    \end{align*}
\end{proof}

Now for noncontinuous priors, we can induce a threshold $T$ by $P = (1-\frac{1}{\boxNum})^{\boxNum}$ and refer to \Cref{lem: middle threshold point mass} for details on an explanation of why this $T$ works.

\begin{lemma}[Tightness of \Cref{lem:IID deviation guarantee}]
\label{lem:tightness IID deviation guarantee}
Under strategtic reward signaling, there exist IID distributions $\prior_{1:\boxNum}$ such that no threshold stopping policy
can achieve $(1-\frac{1}{e} + \epsilon)$-approximation for any $\epsilon > 0$.
\end{lemma}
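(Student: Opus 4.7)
The plan is to establish the tightness of the $1-\sfrac{1}{e}$ bound by reducing it to the analogous tightness for percentage threshold policies in the classical (non-strategic) IID prophet inequality. First, I would use \Cref{prop:optimal signaling scheme} to characterize the searcher's payoff: for a continuous IID prior $\prior$ and any threshold $T$ above the prior mean $\priorMean$, each player pools with cutoff $t(T)$ satisfying $\expect{X \mid X \ge t(T)} = T$, giving $\searcherUS(T) = T \cdot \left(1 - \prior(t(T))^{\boxNum}\right)$. Rewritten as a function of $t$, this equals $\expect{X \mid X \ge t}\cdot\left(1 - \prior(t)^{\boxNum}\right)$, which is precisely the expected payoff of a non-strategic searcher using the percentage threshold policy that accepts the first $X_i$ in the top $(1-\prior(t))$-quantile. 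So for continuous IID priors, the set of strategic threshold-policy payoffs coincides with that of non-strategic percentage-threshold payoffs, and the competitive ratios match.

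Given this correspondence, the next step would be to import a known hard instance for percentage-threshold policies in the classical IID prophet inequality, of the kind referenced in the remark following \Cref{lem:IID deviation guarantee}: for example, a suitably heavy-tailed distribution (such as a truncated Pareto with tail exponent tending to $1$) whose parameters scale with $\boxNum$. After reparameterizing the threshold by $c = \boxNum(1-\prior(t))$, so that $\prior(t)^{\boxNum} \to e^{-c}$ as $\boxNum \to \infty$, the ratio $\searcherUS(T)/\OPT$ takes the asymptotic form $(1-e^{-c})\cdot \psi(c)$, where $\psi$ is a slowly-varying profile determined by the distribution's tail shape. The hardness family is chosen so that $\psi$ is tuned with $\sup_c (1-e^{-c})\psi(c) = 1-\sfrac{1}{e}$, attained at $c = 1$ (which corresponds to the achievability threshold in \Cref{lem:IID deviation guarantee}). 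Translating back via the equivalence then shows no strategic threshold beats $1-\sfrac{1}{e} + o(1)$ on this instance.

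The hard parts will be twofold. First, extending the argument to handle discrete priors: the partial-pooling flexibility in \Cref{prop:optimal signaling scheme point mass} can in principle give the strategic searcher access to payoffs beyond those of non-strategic percentage thresholds, so I would need to verify that on the chosen hard instance the extra flexibility contributes only $o(1)$ to the ratio, by explicitly bounding $\searcherUS(T)$ via the partial-pooling formulas. Second, carefully handling thresholds $T$ outside the `natural' range -- below $\priorMean$, where $\searcherUS(T) = \priorMean \ll \OPT$ on a heavy-tailed prior, and above the upper support, where $\searcherUS(T) = 0$ -- so that only the middle range contributes and the percentage-threshold tightness applies. For every $\varepsilon > 0$, taking $\boxNum$ sufficiently large in the hardness family then yields an IID instance with $\sup_T \searcherUS(T)/\OPT \le 1-\sfrac{1}{e} + \varepsilon$, completing the proof.
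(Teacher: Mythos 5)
Your core idea matches the paper's: reduce strategic threshold policies to non-strategic percentage-threshold policies via the pooling characterization, then import the known hardness for percentage thresholds. The paper uses exactly this reduction, pointing at the explicit hard family of \citet{CFHOV-17} (IID with $\boxNum = n^2$ boxes, support on three points $\{0, 1, n/(e-2)\}$), computing $\lim_{n\to\infty}\OPT = (e-1)/(e-2)$, and verifying via \Cref{prop:optimal signaling scheme point mass} with $t_i = 1$ and $q_i = n^{-2}$ that any strategic threshold $T$ corresponds to a non-strategic percentage-threshold policy with the same payoff.

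Two comments on where your proposal diverges. First, you propose constructing a fresh continuous hard instance (a truncated Pareto-style family), whereas the paper simply uses the already-tight discrete example from the percentage-threshold literature; the latter is cleaner because it avoids you having to re-derive the tightness of $1-\sfrac{1}{e}$ from scratch, and it's worth being aware that such an explicit example exists. Second, your concern that partial pooling in the discrete case might make strategic threshold policies \emph{strictly more} powerful than percentage-threshold policies is in the wrong direction: \Cref{lem:inducing T via p} and the surrounding appendix machinery establish that the map from rejection probabilities to thresholds is a bijection, so the two families of policies achieve \emph{exactly} the same set of payoffs (on any instance). The equivalence is exact, not just $o(1)$, so no extra bounding is needed; once you have the percentage-threshold tightness, the strategic-threshold tightness follows immediately.
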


\begin{proof}
    In \cite{CFHOV-17}, the authors gave an example of distributions that cannot exceed $1-\frac{1}{e}$-approximation if \textit{percentage threshold} policies in the non-strategic setting are allowed; that is, if one accepts any signal $X_i > T$ and partially accepts signal $T$ with a designated probability. The IID distributions they gave have the following form: $\boxNum = n^2$, and each distribution is supported only on $3$ points: 
    \begin{itemize}
    \item $\frac{n}{e-2}$ with probability $n^{-3}$,
    \item $1$ with probability $n^{-1}$
    \item $0$ otherwise
    \end{itemize}

    From these distributions, one can compute $\lim_{n \to \infty} \OPT = \frac{e-1}{e-2}$, and the optimal percentage threshold policy is accepting any signal greater than $1$, and accepting signal $1$ with probability $n^{-1}$ conditioning on seeing the signal as $1$ as $n \to \infty$ (the optimal probability is roughly $n^{-1}$ for large enough $n$). This policy gives the searcher a payoff of approximately $(\frac{e-1}{e-2}) \cdot (1+\frac{1}{e})$. Note that any threshold policy in the strategic setting can be derived from a percentage-based threshold policy using \Cref{prop:optimal signaling scheme point mass}; here we set $t_i = 1$ and $q_i = n^{-2}$ so that
    \begin{align*}
        T = \frac{\frac{n}{e-2} \times n^{-3} + 1 \times n^{-2}}{n^{-3} + n^{-2}} = \frac{(e-1)n}{(e-2)(n+1)}
    \end{align*} These two policies give the searcher the same payoff. It follows that, for $n$ large enough, there is no threshold policy in the strategic setting that gives the searcher payoff higher than $(\frac{e-1}{e-2}) \cdot (1+\frac{1}{e}) + \epsilon$ for any $\epsilon > 0$.
\end{proof}

\section{Missing Proofs in \texorpdfstring{\Cref{sec:log concave}}{}}
\label{apx:proof log concave}

\begin{proof}[Proof of \Cref{lem:convexity}]
    When $\Bar{\prior}$ is convex, $ 2 \cdot \halfMax = \priorMean(\Bar{\prior}) \le \medianThresholdUB(\Bar{\prior})$. 
    To see this, we first note that by Jensen's inequality, we must have
    \begin{align*}
        \Bar{\prior}(\lambda(\Bar{\prior})) \le \sfrac{1}{2}
    \end{align*}
    Moreover by definition of $\medianThresholdUB$, we know:
    \begin{align*}
        \prob[\Bar{\prior}]{X \ge \medianThresholdUB(\Bar{\prior})} \ge \sfrac{1}{2} \ge 
        \Bar{\prior}(\lambda(\Bar{\prior}))~,
    \end{align*}
    implying $\priorMean(\Bar{\prior}) \le \medianThresholdUB(\Bar{\prior})$.

    Thus, for any $T$ such that $ \priorMean(\Bar{\prior}) \le T\le \medianThresholdUB(\Bar{\prior})$, we have $\Bar{\prior}(T) \le \frac{1}{2}$, so 
    \begin{align*}
        (1-\Bar{H}(T))\cdot T \ge \priorMean(\Bar{\prior})/2
    \end{align*}
    The conclusion follows by \Cref{{lem:robustSimpleThres}}, which is stated and proved below.
\end{proof}

\begin{lemma}
\label{lem:robustSimpleThres}
Given any threshold stopping policy with threshold $T$ such that  $ \priorMean(\Bar{\prior}) \le T\le \medianThresholdUB(\Bar{\prior})$, if $(1-\Bar{H}(T))T \ge \alpha \lambda(\Bar{\prior})$, then this policy is $\alpha$-robust. 
\end{lemma}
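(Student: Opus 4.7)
The definition of $\alpha$-robustness decouples into two subgoals: (a) strategic $\alpha$-approximation, $\searcherUS(T) \ge \alpha \OPT$; and (b) non-strategic $\sfrac{1}{2}$-approximation, $\searcherUNS(T) \ge \OPT/2$. I would dispatch these two conclusions independently.

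\textbf{Dispatching (b).} Note that $\priorMean(\Bar{\prior}) = \expect{\max_i X_i} = \OPT = 2\halfMax$. So the hypothesis $\priorMean(\Bar{\prior}) \le T \le \medianThresholdUB(\Bar{\prior})$ actually places $T \in [2\halfMax,\medianThresholdUB] \subseteq [\halfMax, \max\{\medianThresholdUB,\TEqual\}]$, which is precisely the classical $\sfrac{1}{2}$-approximation threshold spectrum from \Cref{defn:1/2 threshold}. The non-strategic $\sfrac{1}{2}$-guarantee therefore follows immediately from the standard prophet inequality, with no new work needed.

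\textbf{Dispatching (a).} This is where the hypothesis $(1-\Bar{\prior}(T))T \ge \alpha \lambda(\Bar{\prior})$ is consumed. By \Cref{prop:optimal signaling scheme}, each player $i$'s optimal reply to threshold $T$ is of one of two forms: (i) if $T \le \lambda_i$, player $i$ reveals no information, so the searcher always stops at $i$; (ii) if $T > \lambda_i$, player $i$ uses the pooling cutoff $t_i$ satisfying $T = \expect{X_i \mid X_i \ge t_i}$, so the ``accept'' signal fires with probability $1-\prior_i(t_i)$. The crucial observation is that $t_i \le T$ in case (ii) (directly from $T = \expect{X_i\mid X_i \ge t_i} \ge t_i$), which yields the monotonicity
\begin{equation*}
1 - \prior_i(t_i) \;\ge\; 1 - \prior_i(T) \qquad \text{for every } i,
\end{equation*}
with case (i) subsumed trivially ($1 \ge 1 - \prior_i(T)$). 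Since the players' rewards are independent, the probability that the searcher stops at some player is at least $1 - \prod_i \prior_i(T) = 1 - \Bar{\prior}(T)$. Each accepted signal's posterior mean is at least $T$, so
\begin{equation*}
\searcherUS(T) \;\ge\; T\bigl(1 - \Bar{\prior}(T)\bigr) \;\ge\; \alpha\, \lambda(\Bar{\prior}) \;=\; \alpha\, \OPT,
\end{equation*}
which is exactly (a).

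\textbf{Anticipated difficulty.} No serious obstacle; this lemma is essentially a packaging of the pooling-cutoff characterization in \Cref{prop:optimal signaling scheme} together with the known $\sfrac{1}{2}$-approximation spectrum from \Cref{defn:1/2 threshold}. The only delicate detail is verifying $t_i \le T$ uniformly across the two regimes of \Cref{prop:optimal signaling scheme}, which is immediate from the conditional-expectation definition of $t_i$. The bound $\searcherUS(T) \ge T(1 - \Bar{\prior}(T))$ is conservative (it credits each acceptance with reward exactly $T$, ignoring the extra surplus when a no-information player is hit), but conservative is sufficient to convert the hypothesis into the claim without any further distributional assumption.
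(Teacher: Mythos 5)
Your proof is correct and follows essentially the same route as the paper's: part (b) is dispatched by placing $T$ in the $\sfrac{1}{2}$-approximation spectrum of \Cref{defn:1/2 threshold}, and part (a) hinges on $t_i \le T$ giving $\prod_i \prior_i(t_i) \le \Bar{\prior}(T)$ and hence $\searcherUS(T) \ge T(1-\Bar{\prior}(T)) \ge \alpha\lambda(\Bar{\prior}) = \alpha\OPT$. The only cosmetic difference is that you explicitly handle the $T \le \lambda_i$ regime whereas the paper rules it out by observing $\lambda_i \le \lambda(\Bar{\prior}) \le T$; both treatments are fine.
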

\begin{proof}[Proof of \Cref{lem:robustSimpleThres}]
    Firstly, in the non-strategic setting, the searcher achieves $\sfrac{1}{2}$-approximation since $T$ lies within the $\sfrac{1}{2}$-approximation threshold spectrum  (\Cref{defn:1/2 threshold}). Now, consider the case when the players are being strategic. Given any threshold stopping policy with threshold $T$, let $(G_1^T, \ldots, G_\boxNum^T)$ be 
    the players' optimal information revealing strategies. 
    We know that each $G_i^T$ must be a binary distribution such that $\supp(G_i^T) =\{a_i^T, T\}$, because $\priorMean_i \le \priorMean(\Bar{\prior}) \le T$ by definition of expected max. Let $t_i(T)$ be the pooling cutoff. $t_i(T)$ satisfies \[
    \int_{t_i(T)}^1 s \mathrm{d} {\prior_i(s)} = (1-\prior_i(t_i(T)))\cdot T
    \]
    Let $G(T)$ denote the probability that none of player $1$ through $\boxNum$ realizes a reward greater than $T$ in the strategic setting. We know 
    \[
    G(T) = \prod_{i=1}^\boxNum G_i^T(T) = \prod_{i=1}^\boxNum \prior_i(t_i(T)) \le \prod_{i=1}^\boxNum \prior_i(T) = \Bar{\prior}(T)
    \]
    where the inequality is because $\prior_i$ is non-decreasing and $t_i(T) \le T$. It follows that the payoff of the searcher in the strategic setting is \[
    \searcherUS(T) \ge (1-G(T))\cdot T \ge (1-\Bar{H}(T))\cdot T \ge \alpha \lambda(\Bar{\prior})
    \]
    Thus, this simple threshold policy is also an $\alpha$-approximation in the strategic setting. It follows that the policy is a $\alpha$-robust policy. 
\end{proof}

\begin{proof}[Proof of \Cref{lem:conditionForConvex}]
    Let each player $i$ have cdf $\prior_i$ and pdf $\priorPDF_i$, and denote $F_i = \log(\prior_i)$, which is well-defined since the pdf, and therefore the cdf, are log-concave. Recall that
    \begin{align*}
        \Bar{\prior}(x) = \prod_{i} \prior_i(x)
    \end{align*}
    Taking the log of both sides,
    \begin{align*}
        F(x) = \sum_{i} F_i(x)
    \end{align*}
    Taking the second derivative, we get
    \begin{align*}
        \frac{\Bar{\prior}''(x)\Bar{\prior}(x) - [\Bar{\prior}'(x)]^2}{[\Bar{\prior}(x)]^2} &= \sum_i F_i''(x)\\
    \end{align*}
    Rearranging terms,
    \begin{align*}
        \frac{\Bar{\prior}''(x)}{\Bar{\prior}(x)} &= \frac{\Bar{\prior}'(x)^2}{\Bar{\prior}(x)^2} + \sum_i F_i''(x) =(F'(x))^2 + \sum_i F_i''(x) =  (\sum_i F_i'(x))^2 + \sum_i F_i''(x)
    \end{align*}
    Note if we can show the right hand side is $\ge 0$, then $\Bar{\prior}''(x) \ge 0$, and thus $\Bar{\prior}$ is convex. To accomplish this, we just need to show for all $i$,
    \begin{align*}
        F_i''(x) + \sum_{j} F_i'(x)F_j'(x) &\ge 0 \\
        \Longleftrightarrow \frac{\priorPDF_i'(x)\prior_i(x) - \priorPDF_i^2(x)}{\prior_i^2(x)} + \sum_j \frac{\priorPDF_j(x)\priorPDF_i(x)}{\prior_j(x)\prior_i(x)} &\ge 0 \\ 
        \Longleftrightarrow \frac{\priorPDF_i'(x)}{\prior_i(x)} + \sum_{j \neq i} \frac{\priorPDF_j(x)\priorPDF_i(x)}{\prior_j(x)\prior_i(x)} &\ge 0 \\ 
        \Longleftrightarrow \frac{\priorPDF_i'(x)}{\priorPDF_i(x)} + \sum_{j \neq i} \frac{\priorPDF_j(x)}{\prior_j(x)} & \ge 0
    \end{align*}

    The last step holds because $h_i$ and $H_i$ are log-concave and thus positive. Since our distributions are from $F_{\alpha,\beta}$, $\priorPDF_i(1) \ge \alpha$ and $\priorPDF_i'(1) \ge -\beta$. Since $\priorPDF_i$ is log-concave, we know $\frac{\priorPDF_i'(x)}{\priorPDF_i(x)} = \frac{d}{dx} \log(\priorPDF_i(x))$ is non-increasing. Also, since $\prior_j$ is log-concave, we know $\frac{\priorPDF_j(x)}{\prior_j(x)} = \frac{d}{dx} \log(\prior_j(x))$ is also non-increasing. Hence for all $x$, 
    \begin{align*}
        \frac{\priorPDF_i'(x)}{\priorPDF_i(x)} + \sum_{j \neq i} \frac{\priorPDF_j(x)}{\prior_j(x)} \ge \frac{\priorPDF_i'(1)}{\priorPDF_i(1)} + \sum_{j \neq i} \frac{\priorPDF_j(1)}{\prior_j(1)} \ge \frac{\priorPDF_i'(1)}{\priorPDF_i(1)} + \sum_{j \neq i} \priorPDF_j(1) \ge \frac{\beta}{\alpha} + (\boxNum-1) \alpha \ge 0
    \end{align*}

    Therefore, we see that $\Bar{\prior}$ is convex, as desired.
\end{proof}

\end{document}